\newtheoremstyle{nonumber}
  {}                
  {}                
  {\itshape}        
  {}                
  {\bfseries}       
  {}                
  { }               
  {}                
\theoremstyle{nonumber}
\newtheorem*{theorem}{Theorem}
\newtheorem*{lemma}{Lemma}
\definecolor{fuchsia}{rgb}{1.0, 0.0, 1.0}
\definecolor{fuchsia}{rgb}{1.0, 0.0, 1.0}
\definecolor{maroon}{rgb}{0.788, 0.0, 0.086}
\definecolor{ao}{rgb}{0.0, 0.5, 0.0}
\definecolor{green}{rgb}{.2,.6,.2}
\definecolor{brickred}{rgb}{0.8, 0.25, 0.33}
\definecolor{brightcerulean}{rgb}{0.11, 0.67, 0.84}
\definecolor{maroon}{rgb}{0.788, 0.0, 0.086}
\definecolor{ao}{rgb}{0.0, 0.5, 0.0}
\newcommand{\ham}{\hat{H}}
\newcommand{\fig}{Fig.}
\newcommand{\seclab}{Sec.}
\begin{document}

\title{Learning interactions between Rydberg atoms}

\author{Olivier~Simard}
\thanks{These two authors contributed equally.}
\affiliation{Collège de France, PSL University, 11 place Marcelin Berthelot, 75005 Paris, France}
\affiliation{CPHT, CNRS, École Polytechnique, IP Paris, F-91128 Palaiseau, France}
\author{Anna~Dawid}
\thanks{These two authors contributed equally.}
\affiliation{Center for Computational Quantum Physics, Flatiron Institute, 162 Fifth Avenue, New York, NY 10010, USA}
\affiliation{$\langle aQa^L\rangle$ Applied Quantum Algorithms -- Leiden Institute of Advanced Computer Science \\ \& Leiden Institute of Physics, Universiteit Leiden, The Netherlands}
\author{Joseph~Tindall}
\affiliation{Center for Computational Quantum Physics, Flatiron Institute, 162 Fifth Avenue, New York, NY 10010, USA}
\author{Michel~Ferrero}
\affiliation{Collège de France, PSL University, 11 place Marcelin Berthelot, 75005 Paris, France}
\affiliation{CPHT, CNRS, École Polytechnique, IP Paris, F-91128 Palaiseau, France}
\author{Anirvan~M.~Sengupta}
\affiliation{Center for Computational Quantum Physics, Flatiron Institute, 162 Fifth Avenue, New York, NY 10010, USA}
\affiliation{Center for Computational Mathematics, Flatiron Institute, 162 5th Avenue, New York, New York 10010, USA}
\affiliation{Department of Physics and Astronomy, Rutgers University, Piscataway, New Jersey 08854, USA}
\author{Antoine~Georges}
\affiliation{Collège de France, PSL University, 11 place Marcelin Berthelot, 75005 Paris, France}
\affiliation{CPHT, CNRS, École Polytechnique, IP Paris, F-91128 Palaiseau, France}
\affiliation{Center for Computational Quantum Physics, Flatiron Institute, 162 Fifth Avenue, New York, NY 10010, USA}
\affiliation{DQMP, Université de Genève, 24 quai Ernest Ansermet, CH-1211 Genève, Switzerland}

\date{\today}

\begin{abstract}
Quantum simulators have the potential to solve quantum many-body problems that are beyond the reach of classical computers, especially when they feature long-range entanglement. To fulfill their prospects, quantum simulators must be fully controllable, allowing for precise tuning of the microscopic physical parameters that define their implementation. We consider Rydberg-atom arrays, a promising platform for quantum simulations. 
Experimental control of such arrays is limited by the imprecision on the optical tweezers positions when assembling the array, hence introducing uncertainties in the simulated Hamiltonian. In this work, we introduce a scalable approach to Hamiltonian learning using graph neural networks (GNNs). We employ the Density Matrix Renormalization Group (DMRG) to generate ground-state snapshots of the transverse field Ising model realized by the array, for many realizations of the Hamiltonian parameters. Correlation functions reconstructed from these snapshots serve as input data to carry out the training. We demonstrate that our GNN model has a remarkable capacity to extrapolate beyond its training domain, both regarding the size and the shape of the system, yielding an accurate determination of the Hamiltonian parameters with a minimal set of measurements. We prove a theorem establishing a bijective correspondence between the correlation functions and the interaction parameters in the Hamiltonian, which provides a theoretical foundation for our learning algorithm. Our work could open the road to feedback control of the positions of the optical tweezers, hence providing a decisive improvement 
of analog quantum simulators.
\end{abstract}

\pacs{}

\maketitle


\section{Introduction}

Analog quantum simulators aim at answering fundamental questions about physics and solve problems that are not accessible to classical computers, such as quantum many-body problems with many degrees of freedom and large-scale entanglement~\cite{bloch2008MBPultracold, Lamata18AdvPhys, Tomza19RevModPhys, Kjaergaard20AnnuRev,Browaeys20NatPhys,Monroe21RevModPhys, baranov2012condensed}. Recently, this promise is becoming a reality thanks to great experimental progress in controlling the individual quantum degrees of freedom of quantum systems with optical tweezers~\cite{EndresScience16, Barredo2016Science} and single atom-resolved detection techniques such as quantum gas microscopy~\cite{Bakr09Nature, Sherson10Nature, Cheuk15PRL, Haller15NatPhys}, among others. Still, experimental quantum platforms are noisy and imperfect, and this impacts quantum computing~\cite{jaksch2000fast, demille2002qcomputation, morgado2021quantum} and sensing~\cite{ludlow2015optical, bongs2019taking}. To recognize and counter these imperfections, the field needs fast and scalable methods to characterize and verify prepared quantum states and Hamiltonians~\cite{Eisert2020qcertification, Carrasco2021qverification}.

One of the frontiers of quantum characterization and verification is Hamiltonian learning, \textit{i.e.} inferring the experimentally realized Hamiltonian through measurements. In realizable cases, the Hamiltonian can be directly reconstructed from knowledge of the steady states of the system \cite{Qi2019singleeigenstate, Bairey2019localmeasurements, Bairey2020Hlearnopensystem, nandy2024reconstructing}, the wavefunction following a quantum quench ~\cite{li2020quantumquench, Olsacher2024HandLlearning} or following the application of single-qubit pulses ~\cite{Wang2015Htomography}. However, the realization of such methods is not always feasible in an arbitrary quantum system due to their reliance on knowledge of the amplitudes of the quantum state.

The Hamiltonian can also be learned without the full knowledge of the prepared quantum state. For instance, in Bayesian techniques, the Hamiltonian parameters are estimated by iteratively updating a probability distribution over possible Hamiltonians based on the system's measurement outcomes~\cite{Bairey2019localmeasurements, evans2019scalablebayesianHL, Landa2022expBayesian}. Meanwhile, variational and gradient-based methods rely on optimizing a cost function based on how well the candidate Hamiltonian predicts the measurement outcomes. These methods can be combined with a trusted quantum simulator if available~\cite{Wiebe2014Hlearning, Wang2017expHlearn}.
The gradient-based approach has been implemented using tensor networks~\cite{wilde2022scalably}, neural differential equations \cite{heightman2024HlearningNDE}, and by estimating the time derivatives of some observables \cite{StilckFranca2024Hestimation, Gu2024practicalHlearning}. 

\begin{figure*}[t]
    \centering
    \includegraphics[width=\textwidth]{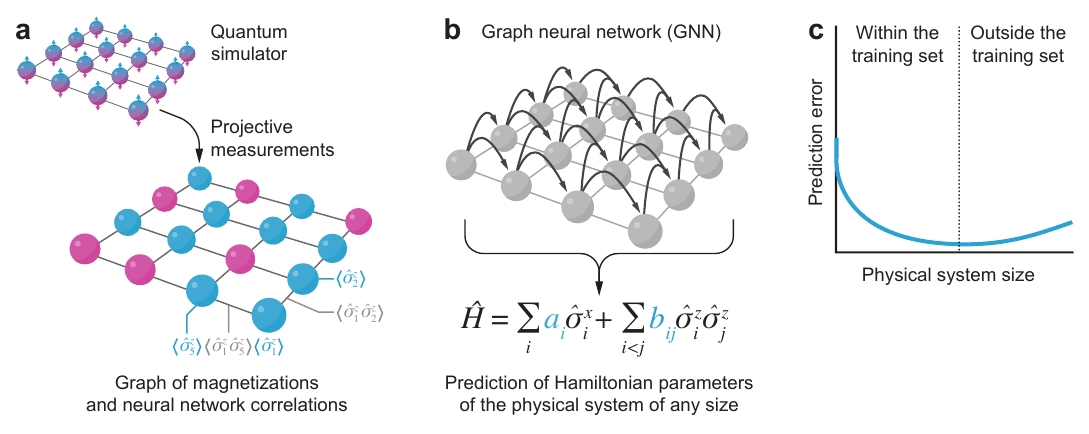}
    \caption{\textbf{Graph neural networks (GNNs) for Hamiltonian learning.} (a) We represent the information about a quantum simulator in the form of a graph, where nodes contain on-site information and edges represent two-body correlators. The sketched graph includes only NN two-body correlators but generally can include correlators of arbitrary range and number of bodies. (b) A GNN takes in a graph of any size and learns to update its values into a useful representation, which then forms a basis for predicting an arbitrarily large number of parameters of pre-defined Hamiltonian terms. (c) We demonstrate that the GNN is able to predict Hamiltonian parameters of quantum systems of large sizes while being trained only on small sizes.}
    \label{fig:intro}
\end{figure*}

Despite these methodological advances, Hamiltonian learning is especially challenging and remains impractical in regimes where \textbf{i}) comparison with classical techniques, such as tensor networks, is no longer possible \cite{nandy2024reconstructing, wilde2022scalably, heightman2024HlearningNDE}, \textbf{ii}) the Hamiltonian has many parameters and terms, some of which may be unknown~\cite{li2020quantumquench, Olsacher2024HandLlearning, wilde2022scalably, Kokail2021EHlearning}, \textbf{iii}) the experimental setup imposes constraints over the accessible measurement set~\cite{li2020quantumquench, Olsacher2024HandLlearning,Wang2015Htomography}, 
and \textbf{iv}) the task needs to be realized fast~\cite{Kokail2021EHlearning} using a viable number of measurements.

Given the ongoing success of machine learning (ML) in quantum physics~\cite{Gebhart2023learningqsystems, dawid2025MLinquantum}, especially in quantum state tomography~\cite{torlai2018QST, carrasquilla2019QST, torlai2019QST, ahmed2021QST, lange2023adaptiveQST, An2024unifiedQSTHL} and phase classification \cite{Nieuwenburg17NatPhys, liu2018discriminativephase, greplova_unsupervised_2020, Kaming2021, patel2022unsupervised, miles2023machine, cybinski2024tetrisCNN}, it is promising to address the outstanding challenges of Hamiltonian learning with neural networks. This direction has already been fruitful~\cite{nandy2024reconstructing, xin2019localmeasurementQST, valenti2019hamlearning, Che2021Hfromsinglequbit, Genois2021, valenti2022hamlearning, koch2023advHlearn, Karjalainen2023Haminference} and outperformed Bayesian methods in a direct comparison~\cite{valenti2022hamlearning}. While these neural-network-based methods are able to quickly predict many Hamiltonian parameters at once with low error based on a user-defined form of measurements, they require accurate numerical simulations of the quantum system under consideration and, therefore, do not scale up to large two-dimensional (2D) systems.

Here, we present a Hamiltonian learning approach that addresses the scalability challenge of neural-network-based approaches and is driven by the needs and limitations of analog quantum simulators based on Rydberg atoms (\fig~\ref{fig:intro}). We demonstrate how a graph neural network (GNN), designed to be invariant with respect to the size of the quantum system, learns Hamiltonian parameters of a quantum transverse-field Ising model (TFIM) defined on small computationally tractable 2D arrays and successfully extrapolates to larger 2D arrays with only small impacts to the GNN's predictive accuracy. This opens the way to tackle experimental challenges such as learning the noise sources or correcting errors in experimentally relevant regimes with the training data obtained via numerical simulations of smaller physical systems. 

In experiments, the sources of imprecision come into two different classes. The first is of a technical nature and originates from the difficulty of making a homogeneously spaced 2D arrangement of Rydberg atoms. The second class encompasses fundamental imprecision sources, which are the quantum zero-point motion of the atoms sitting in the finite potential well and the thermal fluctuations associated with the millikelvin temperatures. Here, we focus on the learning task that addresses the technical uncertainty leading to random optical tweezer displacements in the system. The GNN learns to predict tweezer displacements based on observables such as the local magnetization and nearest-neighbor (NN) and next-nearest-neighbor (NNN) spin correlators, which can be obtained from projective measurements performed on quantum simulators. Thanks to the GNN prediction speed and accuracy, it could be used in future setups to counteract the tweezer displacements, hence providing feedback control on the experiment. 
This would pave the way towards the study of error-free Hamiltonians, such as those featuring quantum spin liquids~\cite{PhysRevB.106.115122}, which would be sensitive to disorder generated by random tweezer displacements due to the narrow parameter ranges where they take place~\cite{doi:10.7566/JPSJ.84.024720,PhysRevB.92.140403,PhysRevB.95.035141}. 

In Sec.~\ref{sec:Methods}, we describe in more detail the ultracold Rydberg platform, which inspired the learning task. There, we also introduce a bijective correspondence between the correlation functions and the interaction parameters of the Hamiltonian, which provides a theoretical foundation for our learning algorithm. This proof is laid out in Appendix~\ref{app:bijection_proof}.  Finally, we describe the learning algorithm itself and introduce the GNN architecture and its training procedure. In \seclab~\ref{sec:results}, we demonstrate the scalability of the GNN and the dependence of its prediction error on data representation. We also show how the GNN prediction error depends on the number of snapshots used to estimate correlators. We furthermore compare the GNN performance to that of a multi-layer perceptron (MLP). In \seclab~\ref{sec:conclusions}, we conclude by discussing open problems and the next directions unlocked by our Hamiltonian learning approach.

\section{Methods}
\label{sec:Methods}

We start this section by describing briefly in Sec.~\ref{sec:physical_Ham} the Rydberg-atom platform for quantum simulation, which inspired our learning task, as well as the Hamiltonian studied in this work. In Sec.~\ref{app:bijection_explained}, we outline a proof, detailed in Appendix~\ref{app:bijection_proof}, that demonstrates a bijective relation between spin exchange couplings and spin-spin correlation functions. This bijective relation ensures, for instance, that NN correlation functions suffice to uniquely determine NN couplings defining the Hamiltonian, and therefore the relative distances of all the NN Rydberg atoms on the lattice. Then, in Sec.~\ref{sec:Machine_learning_technique}, we introduce the key ML techniques used to predict the atomic displacements from the DMRG local magnetization and spin correlators, as well as the metrics used to judge the predictive power of the neural network.

\subsection{Rydberg platform and the studied Hamiltonian}
\label{sec:physical_Ham}

Among the promising quantum simulation platforms stand Rydberg-atom  arrays~\cite{Labuhn2016theoryexp,Scholl2021antiferro2D,Henriet2020quantumcomputing}. By means of optical tweezers, the atoms are trapped and act like pseudo-spin-$\frac12$ qubits, which can be addressed individually. On these platforms, the TFIM can be natively implemented, as well as other spin models, like the quantum XY~\cite{PhysRevLett.114.113002} or quantum XXZ models~\cite{PRXQuantum.3.020303}, by introducing anisotropy in the spin exchange energy. In Rydberg-based platforms, quantum entanglement is generated via the Rydberg blockade mechanism, which arises from the electric dipole-dipole interaction, a second-order process that prevents the simultaneous electronic excitation of nearby atoms into Rydberg states.

In this work, we solve the Hamiltonian learning task for the 2D transverse-field Ising model (TFIM)
\begin{equation}
\label{eq:H}
   \hat{H} = \sum_{i<j} \underbrace{\frac{C_6}{{R_{ij}}^6}}_{\equiv J_{i,j}} \hat{\sigma}^z_i\hat{\sigma}_j^z + \hbar\Omega\sum_i\hat{\sigma}^x_i + \hbar\delta\sum_i\hat{\sigma}^z_i,
\end{equation}
where $i<j$ runs over all atomic positions while avoiding double-counting of pairs, $C_6/\hbar\simeq 5.42\times 10^{6}\,\text{rad}\cdot\mu\text{s}^{-1}\cdot\mu\text{m}^6$ is the dipole-dipole interaction term originating from the van der Waals interaction for Rubidium with principal quantum number $n=70$~\cite{Scholl2021antiferro2D}, with $\hbar$ the reduced Planck constant. The parameter $\Omega$ stands for the transverse Ising field (Rabi frequency), $\delta$ for the detuning field, whilst $\hat{\sigma}^z$ and $\hat{\sigma}^x$ denote the corresponding Pauli matrices. The distance $R_{ij}$ describes the pairwise distance between any two atoms $i$ and $j$ sitting in the array. Hence, the interaction term (first term in Eq.~\eqref{eq:H}) covers all pairs of Rydberg atoms over the whole array, and is in essence a long-range interaction which is responsible for the Rydberg blockade whenever the Rabi frequency $\hbar\Omega\ll C_6/R^6_{ij}\equiv J_{i,j}$. The two-level system is encoded in the Rydberg states, typically $s$ orbitals. In the following, we will set $\hbar = 1$.

To emulate the TFIM on a Rydberg-atom platform, we employ the Density Matrix Renormalization Group (DMRG) algorithm \cite{PhysRevLett.69.2863,PhysRevB.48.10345} to solve the TFIM. We provide more details on the DMRG calculations in Appendix~\ref{app:technical_details_DMRG}.

\subsection{Bijection between interactions and spin-spin correlators}
\label{app:bijection_explained}

This section outlines a general theorem for Hamiltonian reconstruction, established in detail in Appendix~\ref{app:bijection_proof}. This theorem states that any two-body correlation function $c_{i,j}$ that involves an arbitrary pair of lattice (orbital) degrees of freedom is uniquely (one-to-one) mapped to a corresponding Hamiltonian interaction term $J_{i,j}$. It assumes that the correlation matrices $c_{i,j}$ are $J$-representable, meaning they correspond to correlation functions for some choice of $J_{i,j}$, in analogy with the concept of $V$-representability~\cite{PhysRevA.26.1200,ENGLISCH1983253}. This theorem generalizes the Hohenberg-Kohn (HK) theorem from density functional theory~\cite{HohenbergKohn,Garrigue2019HohenbergKohn} and stands as a quantum generalization to the Henderson theorem~\cite{HENDERSON1974197}. In Appendix~\ref{app:bijection_proof}, by means of a lemma, we specialize the proof of the bijection theorem to the transverse field Ising model (TFIM), which holds if and only if $\Omega\neq 0$.

\begin{theorem}[Hohenberg-Kohn-Henderson Theorem for the TFIM]
For the TFIM as defined in Eq.~\eqref{eq:H}, for fixed nonzero $\Omega$ and fixed $\delta$, there exists a bijection between the $J$-representable correlation functions 
$c_{i,j} = \langle\hat{\mathbf{S}}_i^z \hat{\mathbf{S}}_j^z\rangle$ and the set of interactions $J_{i,j}$, 
$J_{i,j} \Leftrightarrow c_{i,j}$.
\end{theorem}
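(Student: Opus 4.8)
The plan is to reproduce the classic Hohenberg--Kohn double-variational argument in the discrete spin setting, with the transverse field supplying the ground-state non-degeneracy that the argument requires. Write $\hat{H}[J] = \hat{H}_0 + \sum_{i<j} J_{i,j}\,\hat{\sigma}^z_i\hat{\sigma}^z_j$ with $\hat{H}_0 = \Omega\sum_i\hat{\sigma}^x_i + \delta\sum_i\hat{\sigma}^z_i$ held fixed, let $|\psi[J]\rangle$ be a ground state with energy $E_0[J]$, and recall that the $J$-representable correlation matrices are by definition exactly those of the form $c_{i,j}[J] = \langle\psi[J]|\hat{\sigma}^z_i\hat{\sigma}^z_j|\psi[J]\rangle$. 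Surjectivity onto the $J$-representable set is then automatic, so it remains to show that $J\mapsto c[J]$ is single-valued and injective. Equivalently, one may invoke the general Hohenberg--Kohn--Henderson theorem stated above and only verify its hypotheses for the TFIM; the nontrivial hypothesis, non-degeneracy of the ground state, is precisely where $\Omega\neq 0$ enters.

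First I would establish that the ground state is non-degenerate; I expect this to be the main obstacle, and it is the step that uses $\Omega\neq 0$. In the $\sigma^z$ product basis, the only off-diagonal matrix elements of $\hat{H}[J]$ come from $\Omega\sum_i\hat{\sigma}^x_i$ and connect configurations differing by a single spin flip, so the single-flip graph on $\{-1,+1\}^N$ is the hypercube, which is connected and bipartite. Since $\Omega\neq 0$, a diagonal sign gauge supported on one part of the bipartition (trivial when $\Omega<0$) turns $-\hat{H}[J]$, up to an irrelevant constant shift, into an irreducible matrix with non-negative off-diagonal entries; Perron--Frobenius then gives a unique top eigenvector that is strictly positive in the gauged basis. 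Hence $|\psi[J]\rangle$ is unique up to phase and has full support, $\langle s|\psi[J]\rangle\neq 0$ for every configuration $s$, which in particular makes $J\mapsto c[J]$ well-defined. When $\Omega=0$ the Hamiltonian is diagonal, the ground space is generically degenerate, and even the rescaling $J\mapsto\lambda J$ with $\lambda>0$ leaves it unchanged, which proves the necessity of $\Omega\neq 0$.

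Next I would show that distinct couplings yield distinct ground states. If $|\psi[J]\rangle$ and $|\psi[J']\rangle$ were the same ray $|\psi\rangle$, subtracting the two eigenvalue equations gives $\sum_{i<j}(J_{i,j}-J'_{i,j})\,\hat{\sigma}^z_i\hat{\sigma}^z_j\,|\psi\rangle = (E_0[J]-E_0[J'])\,|\psi\rangle$. The operator on the left is diagonal in the $\sigma^z$ basis, and because $|\psi\rangle$ has full support it must be a constant multiple of the identity; but $\{\hat{\sigma}^z_i\hat{\sigma}^z_j\}_{i<j}$ together with the identity are pairwise distinct Walsh characters of $(\mathbb{Z}/2)^N$, hence linearly independent, which forces $J_{i,j}=J'_{i,j}$ for all pairs.

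Finally I would close with the two-sided variational estimate. Assume $J\neq J'$ but $c[J]=c[J']$. By the previous step the ground states are distinct rays, and by non-degeneracy each is the \emph{strict} minimizer of its own energy functional, so
\[ E_0[J] < \langle\psi[J']|\hat{H}[J]|\psi[J']\rangle = E_0[J'] + \sum_{i<j}(J_{i,j}-J'_{i,j})\,c_{i,j}[J'], \]
and, exchanging the roles of $J$ and $J'$, $E_0[J'] < E_0[J] + \sum_{i<j}(J'_{i,j}-J_{i,j})\,c_{i,j}[J]$. Adding the two inequalities and cancelling the energies yields $0 < \sum_{i<j}(J_{i,j}-J'_{i,j})\,(c_{i,j}[J']-c_{i,j}[J])$, whose right-hand side vanishes under the assumption $c[J]=c[J']$ --- a contradiction. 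Hence $J\mapsto c[J]$ is injective and therefore a bijection onto the $J$-representable correlation functions; composing with the monotone map $R_{i,j}\mapsto C_6/R_{i,j}^6$ then upgrades this to a correspondence with interatomic distances. The one delicate point throughout is the strictness of the variational inequalities, which rests entirely on the non-degeneracy secured in the first step; for full rigor one should therefore either work on finite arrays or supply a uniqueness argument that survives the thermodynamic limit.
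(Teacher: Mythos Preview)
Your proposal is correct and follows essentially the same route as the paper: Perron--Frobenius for the stoquastic TFIM to obtain a unique, fully supported ground state, then the eigenvalue-subtraction argument to show that distinct $J$'s cannot share a ground state, then the standard Hohenberg--Kohn double variational contradiction. The only cosmetic differences are that the paper applies Perron--Frobenius to $e^{-\beta\hat H}$ via the interaction-picture expansion rather than directly to a shifted $-\hat H$, and it extracts the individual equalities $J_{i,j}^{(1)}=J_{i,j}^{(2)}$ by summing over explicit configuration subsets rather than invoking orthogonality of Walsh characters; both pairs of arguments are equivalent in content.
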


The proof is given in the Appendix, and follows two key steps. First we establish that, for nonzero $\Omega$, two different interaction matrices $J^{(1)}_{i,j}$, $J^{(2)}_{i,j}$ cannot yield identical ground-states $\Psi$, so that $\Psi^{(1)}\neq \Psi^{(2)}$ if $J^{(1)}_{i,j} \neq J^{(2)}_{i,j}$. This allows then to use the variational principle for the energy as a strict inequality to prove through a \textit{reductio ad absurdum} argument that different interactions matrices, $J^{(1)}_{i,j}$, $J^{(2)}_{i,j}$, imply different sets of correlation matrices, $c^{(1)}_{i,j}\neq c^{(2)}_{i,j}$, respectively. This result serves as a theoretical foundation for reconstructing Hamiltonians from experimentally measured correlation functions in quantum systems, like cold atom quantum simulators. 

In the proof, the key constraints come from Eq.~\ref{eq:SamePsiContradiction} which provides a constraint on the interaction differences $\{J^{(1)}_{i,j}-J^{(2)}_{i,j}\}$ for each conventional $\hat{\sigma}^z$-basis spin configuration $\ket{\sigma}$, for which the amplitude is non-zero. If we randomly picked such a $\ket{\sigma}$ configurations, we would need more than $N(N-1)/2+1$ such configurations to show that these differences are zero. In the context of TFIM with $\Omega\neq 0$, we have an overabundance of such constraints, namely one for each $2^N$ basis vectors, thanks to the lemma detailed in the Appendix~\ref{app:bijection_proof}. We therefore think that, beyond the TFIM, a similar theorem could be proven for systems without strong conditions like stoquasticity \cite{bravyi2008complexity} or irreducibility \cite{crosson2017quantum} for the density matrix (properties enjoyed by the $\Omega\neq0$ TFIM), provided two distinct sets of interactions cannot yield the same ground-state.

\begin{figure*}[t]
    \centering
    \includegraphics[width=\textwidth]{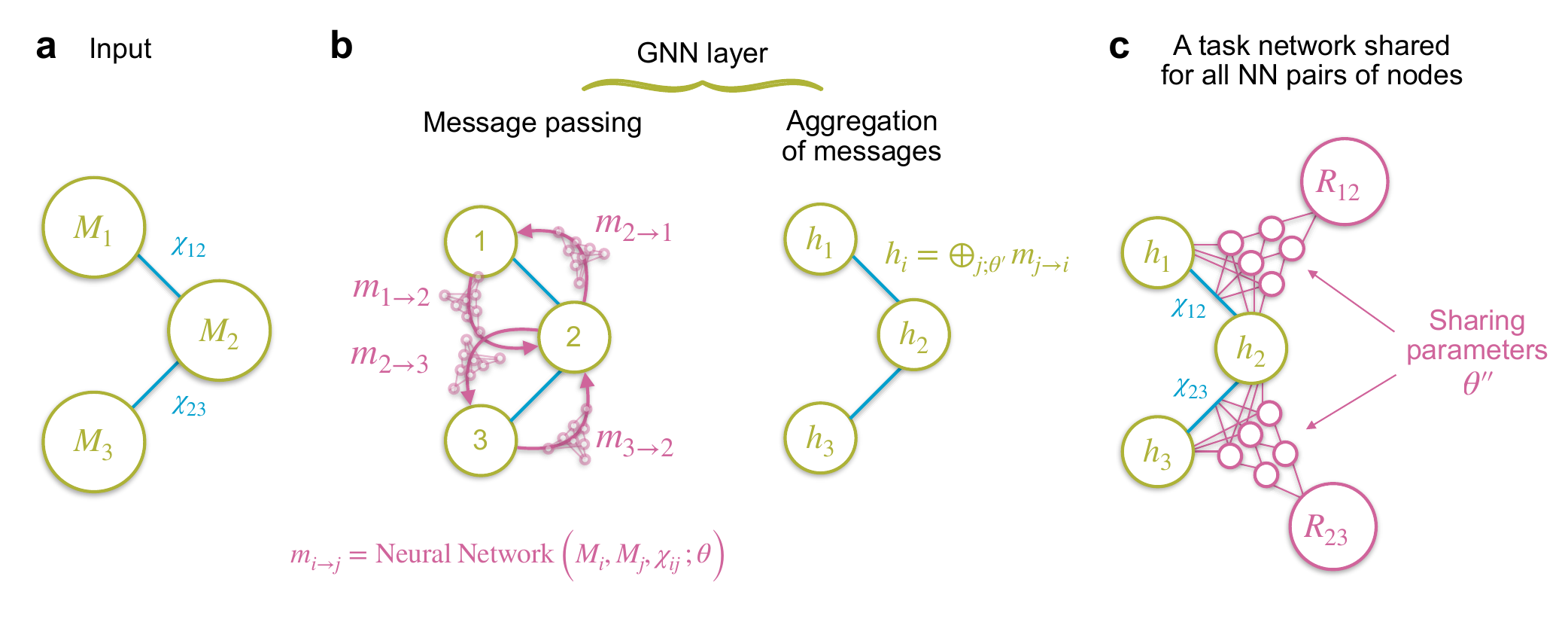}
    \caption{\textbf{Illustration of the elements of the GNN architecture} on an example of a three-atom one-dimensional chain. (a) Input is a graph containing spin correlators measured in a quantum system. Here, for example, nodes contain on-site magnetization, $M_i$, where $i$ indicates a site and nearest-neighbor edges connecting nodes $i$ and $j$ contain NN spin-spin correlations, $\chi_{ij}$. Graphs can contain higher-order many-body and longer-range correlators. (b) A GNN layer updates each node value with a new hidden value $h_i$, preserving the graph structure. The update is based on messages sent between the neighboring nodes, $m_{i \rightarrow j}$, which is predicted by the same message-passing network that takes in the values of neighboring nodes and edges. Then the messages are aggregated via a learned combination of aggregators such as min, max, and mean, and form a new hidden value of each node. (c) The same task network takes in every pair of neighboring nodes along with their spin-spin correlation and predicts the distance between them. With such an architecture, the GNN is invariant to the physical size input and bases its predictions on the atom neighborhood. When comparing to the MLP model, we solely use the very same task networks to read off the NN correlators and infer the NN relative distances.}
    \label{fig:GNN}
\end{figure*}

\subsection{Machine learning techniques}
\label{sec:Machine_learning_technique}

As previously mentioned, the physical setup of interest consists of Rydberg atoms arranged on a 2D surface and maintained in place with optical tweezers. The measurement outcomes of this system, in terms of magnetization and spin-spin correlators, can be natively represented as a graph, see \fig~\ref{fig:GNN}(a). For such a data representation, a well-suited ML architecture comes from the wide class of graph neural networks (GNN). The advantage that is especially important in the context of this work is that they preserve the graph structure of the processed data and can be made size invariant. 
In our case, we achieve size-invariance by combining the GNN with properly designed \textit{task} networks. Specifically, we employ the Principal Neighbor Aggregation (PNA) variant with multiple aggregators~\cite{corso2020principalneighbourhoodaggregationgraph} as a GNN layer, and multi-layer perceptrons (MLPs) as task networks. We present some elements of the GNN in \fig~\ref{fig:GNN}, and explain the key elements defining the GNN and how it processes the physical data in \seclab~\ref{sec:GNN}. To train the GNN, the $L^2$ norm between targets $\mathbf{y}$ and predictions $\hat{\mathbf{y}}$ is evaluated, where the targets and predictions correspond to relative nearest-neighbor (NN) distances ($R_{ij} \to R_{\langle ij\rangle}$) and relative next-nearest-neighbor (NNN) distances ($R_{ij} \to R_{\langle\langle ij\rangle\rangle}$) between atoms
\begin{align}
\label{eq:L2_minimization}
\text{MSE}(\mathbf{y},\hat{\mathbf{y}}) = \|\mathbf{y}-\hat{\mathbf{y}}\|^2,
\end{align}
with $\|\cdot\|$ denoting the squared $L^2$ modulus of the vector. Thus, Eq.~\eqref{eq:L2_minimization} is the cost function to be minimized via gradient descent techniques. The optimizer used during training is AdamW~\cite{loshchilov2019decoupledweightdecayregularization}.

\subsubsection{Graph neural network}
\label{sec:GNN}

A GNN is a neural network architecture that processes data represented as a graph \cite{battaglia2018GNNs}. This architecture is especially attractive for tackling physical problems as the GNN processes the data in a physics-inspired way, emulating interactions between nodes in a graph. Moreover, the GNN layer is invariant to the input size and has more inherent interpretability than other types of neural networks \cite{Cranmer2020}. It has been widely used in classical physics \cite{Shlomi2021GNNparticlephysics, cranmer2021disentangled, Lemos2023rediscovering} and is slowly entering quantum physics as well \cite{tran2022using}.

The GNN layer's input is a graph, and its output is a processed graph of the same structure but modified node (and possibly edge) values. In \fig~\ref{fig:GNN}(a), we show an exemplary input graph that is relevant to this work and that describes the outcomes of measurements on a three-spin chain, as discussed in Secs.~\ref{sec:physical_inputs} and \ref{sec:shemes_correlation_function_estimation}. The GNN layer processes the input graph by sending learnable messages $m_{i \rightarrow j}$ between all pairs of directly connected nodes $i$ and $j$, as shown in \fig~\ref{fig:GNN}(b). All messages within one GNN layer are outputs of the same neural network whose parameters $\mathbf{\theta}$ are optimized to send messages that are relevant to the task at hand. The input of this message-passing neural network is the values of the node pair between which the message is sent (here, $M_i$ and $M_j$) and the value of the edge connecting the node pair (here, $\chi_{ij}$). The same neural network with the same $\mathbf{\theta}$ can send different messages between various pairs of nodes because its input ($M_i$, $M_j$, and $\chi_{ij}$) varies between them. In our case, the message-passing network is a fully-connected neural network. The architectural details are in Appendix~\ref{app:technical_details_training_testing}.

After the messages are sent, they are aggregated in each node to form a new updated hidden value of a node, $h_i = \oplus_{j; \mathbf{\theta'}} \, m_{j \rightarrow i}$. In this work, we employ PNA~\cite{corso2020principalneighbourhoodaggregationgraph} as the GNN layer architecture. Instead of using one aggregation operation, PNA uses a learnable linear combination $\oplus_{\mathbf{\theta'}}$ of multiple aggregators such as summation, multiplication, max, min, and average. Note that it is at the aggregation stage that edge effects in the system can be taken into account. The described GNN layer, composed of message passing and aggregation, can be repeated several times in a network architecture, effectively allowing for `communication' between nodes that are further apart from each other. In this work, we use four consecutive GNN layers, each composed of message passing and aggregation of messages.

We design the full neural network architecture to be independent of the size of the physical system. While the GNN layer natively has this property, ultimately, we also need a \textit{task} network that makes predictions based on the hidden representation of the graph learned by the GNN layer(s). Therefore, we combine the GNN layers with a neural network parametrized by $\mathbf{\theta''}$ that takes as an input hidden values of only two NN nodes $h_i$ and $h_j$ along with their spin-spin correlation $\chi_{ij}$ and predicts the distance between NN spins $i$ and $j$, $R_{\langle ij\rangle}$ (or $R_{\langle\langle ij\rangle\rangle}$ in case of NNN neighbor spins), as presented in \fig~\ref{fig:GNN}(c). The input size independence is achieved here in a similar way as in the GNN layer. One task network learns a mapping between an NN pair and a distance between them and then makes predictions for all possible NN pairs. It can give different predictions for various NN pairs because the hidden values of participating nodes serving as the neural network input are different. In our case, the task network is a fully-connected neural network with two hidden layers. More generally, task networks can be adjusted to the form of Hamiltonian terms whose parameters they need to learn. For one-body Hamiltonian terms, the task network can take one $h_i$ as an input. For $n$-body terms, the input can include hidden values of $n$ nodes.

Therefore, the training of the full neural network aims at finding such parameters $\mathbf{\Theta} \in \{ \mathbf{\theta}, \mathbf{\theta'}, \mathbf{\theta''} \}$ that allow sending useful messages between the graph nodes, an optimal message aggregation that takes into account edge effects in the system, and finally a correct prediction of distances between NN spins, based on the developed hidden representation of the graph.

In \seclab~\ref{sec:comparison_GNN_MLP}, we compare the GNN performance to that of an MLP. When doing so, we employ the same task networks' architecture but without the GNN processing of the graph -- each task network directly takes the NN spin-spin correlation function of an atom pair as input and predicts the corresponding NN distance as output.

\subsubsection{Physical inputs}
\label{sec:physical_inputs}

The input layer to the GNN is fed with samples built in the following way. On the one hand, the node features consist of local magnetization expectations $M_i=\langle\hat{\sigma}^z_i\rangle_{0;\vec{\Omega}}$, with $\hat{\sigma}^z_i$ the Pauli diagonal operator at site $i$ (in the Z-basis), for a sequence of transverse field $\vec{\Omega}$~\cite{dwivedi2022benchmarkinggraphneuralnetworks}. On the other hand, the edge features consist of NN spin-spin correlation functions ($\chi^{\text{NN}}$) and NNN spin-spin correlation functions ($\chi^{\text{NNN}}$), defined as
\begin{align}
\label{eq:NN_spin_corr}
\chi^{z(x)}_{i,j; \vec{\Omega}} = \langle \hat{\sigma}^{z(x)}_i\hat{\sigma}^{z(x)}_j\rangle_{0;\vec{\Omega}} \propto c_{i,j;\vec{\Omega}},
\end{align}
where $i,j$ are NN and NNN connections between Rydberg atoms, respectively, and $\langle\cdot\rangle_0$ means that the expectation value is carried out in the ground state of the system. The vector notation $\vec{\Omega}$ used as subscript means that samples are arrays of such correlation functions across different equally-spaced values of the transverse field (we assume the field $\Omega$ is spatially homogeneous, \textit{i.e.} the same for all atomic sites).

\subsubsection{Schemes to estimate correlation functions}
\label{sec:shemes_correlation_function_estimation}

All the physical observables considered, namely the magnetization and spin-spin correlation functions, are computed using DMRG~\cite{PhysRevLett.69.2863,PhysRevB.48.10345}. There are two distinct ways to evaluate observables from a converged Matrix Product State (MPS) wavefunction obtained via DMRG: exactly via direct contraction of the tensor network $\langle \psi \vert O \vert \psi \rangle$ or with sampling by taking $N_{\rm sample}$ bitstring samples from the MPS distribution $p(x) \sim \vert \psi(x) \vert^{2}$, which can be done efficiently and perfectly ~\cite{stoudenmire2010mpssampling, ferris2012mpssampling}. We refer to the datasets obtained with these two methods as \textit{exact} or \textit{snapshot} datasets, respectively. The former exact datasets are used unless specified otherwise.

When computing the snapshot observables, one can sample the MPS expressed in the Z- or X-basis and then measure in the Z-basis to compute $\chi^{z}$ or $\chi^{x}$, respectively. Indeed, from Eq.~\eqref{eq:NN_spin_corr}, 

\begin{align}
\label{eq:NN_spin_corr_X}
\chi^{x}_{i,j; \vec{\Omega}} = \sum_{\sigma} \left|\bra{\sigma}\hat{H}_{\text{Hd}}\ket{\Psi}_{\vec{\Omega}}\right|^2 \bra{\sigma}\hat{\sigma}^z_i\hat{\sigma}^z_j\ket{\sigma},
\end{align}
where $\hat{H}_{\text{Hd}}$ is the involutive Hadamard operator defined in the Z-basis 
\begin{align*}
\hat{H}_{\text{Hd}} \equiv \frac{1}{\sqrt{2}}\begin{pmatrix}
1 & 1\\
1 & -1
\end{pmatrix}.
\end{align*}
Again, in Eq.~\eqref{eq:NN_spin_corr_X}, the set $\{\ket{\sigma}\}$ denotes the eigenstates of the $\hat{\sigma}_k^{z}$'s.

In the snapshot scenario where we sample the MPS to compute observables, we resort to Eq.~\eqref{eq:NN_spin_corr_X} to compute $\chi^{x}$. We combine the Z-observable ($\chi^{z}$) and X-observable ($\chi^{x}$) in the GNN training at some explicit occasions (the specific training scenarios are covered in Sec.~\ref{sec:training_scenarios}). Moreover, by default, if not explicitly stated otherwise, the inputted observables are Z-observables. The technical details related to the training and testing of the GNN are discussed further in Appendix~\ref{app:technical_details_training_testing}.

\subsubsection{Training scenarios}
\label{sec:training_scenarios}

We have devised several training scenarios in this work to compare different types of input data and architectures and select the best learning procedure. These training scenarios are listed in Table~\ref{table:framed} and described in the following, in order. To start with, \textbf{i}) the local magnetization $\langle\hat{\sigma}_i^z\rangle_{0;\vec{\Omega}}$, piled up across $\vec{\Omega}$, is inputted to the GNN without providing any physical information to the edges of the graph that stores the information about the cold atom array. Only the disorderless edge distances are provided as edge features. Then, \textbf{ii}) in addition to the local magnetization, the NN spin correlation functions (Eq.~\eqref{eq:NN_spin_corr}) stacked up along $\vec{\Omega}$ are inputted to the GNN input layer to dress up the NN graph edges. This scenario should substantially improve the performance of the neural network according to the bijective relation between connected spin correlators $c_{i,j}$ and spin couplings $J_{i,j}$ (see Sec.~\ref{app:bijection_proof} for the proof). Next, \textbf{iii}) we add the NNN spin correlation functions $\chi^{\text{NNN}}$ to the input data, thereby supplementing the NNN graph edges with physical information. In order to evaluate and discriminate the physical benefits from the architectural ones, we also consider scenarios where the features of NNN graph edges are set to unity to help the flow of information during optimization. This way, the NNN edges do not carry any information about $\chi^{\text{NNN}}$. Hence, \textbf{iv}) the features of the NNN graph edges are set to unity across $\vec{\Omega}$ to check out the effect of including NNN edges in the GNN architecture without assigning them any physical information (no values of $\chi^{\text{NNN}}$). The NNN edges \textit{per se} could act as `skip connections' in the back-propagation of the gradients. To continue the comparisons, in light of the bijection relation covered in Sec.~\ref{app:bijection_proof}, \textbf{v}) we remove the physical content from the NNN graph edges (no values of $\chi^{\text{NNN}}$) and graph nodes (no values of local magnetization $M$), and set the features to unity across $\vec{\Omega}$. To finish the comparisons, \textbf{vi}) we combine the Z- and X-observable measurements of the NN and NNN correlators (Eq.~\eqref{eq:NN_spin_corr}) in the training dataset. When computing X-observables by sampling the MPS, we resort to Eq.~\eqref{eq:NN_spin_corr_X}. The magnetization along $z$ is also considered in the training, and a $\Omega$-history is used. All the training scenarios hitherto mentioned are summarized in order in Table~\ref{table:framed}, and they all use the same $\Omega$-history (see further down below).

\begin{table}[h!]
\centering
\begin{tabular}{||l|l||}  
\hline  
Case $\#1$ & $M$ and $\Omega$-history \\ \hline
Case $\#2$ & $M$, $\chi^{\text{NN}}$ and $\Omega$-history \\ \hline
Case $\#3$ & $M$, $\chi^{\text{NN}}$, $\chi^{\text{NNN}}$ and $\Omega$-history \\ \hline
Case $\#4$ & $M$, $\chi^{\text{NN}}$, NNN edges and $\Omega$-history \\ \hline
Case $\#5$ & $\chi^{\text{NN}}$, NNN edges and $\Omega$-history \\ \hline
Case $\#6$ & $M$, $\chi^{\text{NN}}$, $\chi^{\text{NNN}}$ and $\Omega$-history (Z+X) \\ \hline
\end{tabular}
\caption{\textbf{Table of training scenarios.} Summary of the training cases used in this work. If not explicitly mentioned in a particular scenario, all correlation functions are measured in the Z-basis. As a reminder, $M$ represents the local magnetization.}
\label{table:framed}
\end{table}

To generate the graph dataset fed into the GNN, we choose a nominal NN distance between atomic sites $\bar{R}_{\langle i,j\rangle}$ of 10 $\mu$m, corresponding to the array spacing $a$~\cite{Browaeys20NatPhys} -- the notations $a$ and $\bar{R}_{\langle i,j\rangle}$ will be used interchangeably to denote the nominal NN distance. The positions of the Rydberg atoms are slightly shifted about their nominal values, with perturbations comprised within the bracket $\Delta S \in \left[-0.1,0.1\right] \ \mu$m, corresponding to the width of the white box shown in the phase diagram in Fig.~\ref{fig:phase_diag_resolved_5x5}. $\Delta S$ is drawn from a uniform distribution. Hence, the NN relative atomic distance $R_{\langle i,j\rangle} = \bar{R}_{\langle i,j\rangle} + \Delta R_{\langle i,j\rangle}$, where the perturbation in the NN relative distance between atoms lies within $\Delta R_{\langle i,j\rangle} \in \left[-0.2,0.2\right] \ \mu$m. The white box in Fig.~\ref{fig:phase_diag_resolved_5x5} spans the allowed values of the transverse field $\Omega$ selected during the training phase. 

Component-wise, the GNN edge-feature space is made of copies of the exchange interactions $J_{i,j}$ at different equally-spaced values of $\Omega$. The explicit array of ten $\Omega$'s, used by all the scenarios laid out in Table~\ref{table:framed}, expands as such

\begin{align*}
\vec{\Omega} = \left[-\frac{900}{9},-\frac{700}{9},\cdots,\frac{700}{9},\frac{900}{9}\right] \ \text{rad}\cdot\mu s^{-1},
\end{align*}
and is enclosed in the white box of Fig.~\ref{fig:phase_diag_resolved_5x5}. In the definition of $\vec{\Omega}$, an arbitrary increment of $200/9 \ \text{rad}\cdot\mu s^{-1}$ separates each consecutive elements, so as to span both phases of the phase diagram displayed in \fig~\ref{fig:phase_diag_resolved_5x5}.

\subsubsection{Metrics}
\label{subsubsec:metrics}

While investigating the scenarios listed above in Sec.~\ref{sec:training_scenarios}, Table~\ref{table:framed}, we use three main metrics to evaluate performance. Throughout this work, the figure layout orders the metrics from top to bottom. Hence, we first plot in figures the coefficient of determination $R^2$, defined as 
\begin{align}
\label{eq:equation_coefficient_determination}
R^2 = 1 - \frac{\sum_i(y_i-\hat{y}_i)^2}{\sum_i(y_i-\bar{y}_i)^2},
\end{align}
which qualifies the predictive power of the statistical ML process. In Eq.~\eqref{eq:equation_coefficient_determination}, $\bar{y}_i$ stands for the mean of the images $y_i$ of a function to be modeled, and $\hat{y}_i$ represents the prediction of the model, associated to $y_i$. When $R^2$ approaches $1$ from below, the model is deemed to effectively capture the key characteristics of an unknown underlying function. Next, we consider the mean absolute error (MAE), defined as
\begin{align}
\label{eq:mean_absolute_error}
\text{MAE} = \sum_{i=1}^N \frac{\left|\hat{y}_i-y_i\right|}{N},
\end{align}
whose dimension is in nanometers. In Eq.~\eqref{eq:mean_absolute_error}, the sample size is denoted by $N$, and it grows larger as the cluster size increases since the number of graph edges and nodes increases as well. Recall that the extrema of the possible variations (perturbations) of the NN relative distance is $\Delta R = \pm 0.2 \ \mu$m. Then, at last, the median of the absolute error, denoted MEDAE for short, is plotted. The combined estimation of MEDAE and MAE allows us to evaluate the symmetry of the learned probability distribution: if it is skewed, the MEDAE and MAE differ significantly, whereas if it is symmetric, both the MAE and MEDAE look very similar.

\section{Results}
\label{sec:results}

As pointed out in Sec.~\ref{sec:shemes_correlation_function_estimation}, there are two distinct ways to estimate the physical observables from the DMRG ground state. One way is to work out the expectation values having access to the full wave function content, thereby generating exact training datasets, while the other is to `sample' the wave function thereby producing measurement snapshots constituting snapshot datasets. The former is useful to test thoroughly and faithfully the limitations and strengths of the GNN architecture employed, leaving out extra sources of statistical noises that would come from the projective measurements -- only the numerical round-off error would be the source of `noise'. Therefore, the extent to which the bijection relation proven in Appendix~\ref{app:bijection_proof} is respected by the PNA GNN can be assessed. The performance of the GNN training using this exact scheme is presented in Sec.~\ref{subsec:predictions_from_exact_wave_function}. The GNN training scheme utilizing snapshot observables, in closer relation to experiments, is presented in Sec.~\ref{sec:predictions_snapshots}. These two ways of measuring observables should coincide perfectly with each other in the limit where the number of snapshots tends to infinity. Before delving into GNN Hamiltonian learning results of Secs.~\ref{subsec:predictions_from_exact_wave_function} and \ref{sec:predictions_snapshots}, we ought to study more closely the phase diagram of the TFIM at the constant zero detuning field ($\delta = 0$ in Eq.~\eqref{eq:H}). This is done in Sec.~\ref{subsec:phase_diagram}.

\subsection{Phase diagram}
\label{subsec:phase_diagram}

In Fig.~\ref{fig:phase_diag_resolved_5x5}, we illustrate the phase diagram of the TFIM for a window of nominal NN distances $a=\left[7,12\right] \ \mu m$. Two distinct magnetic orders feature on the phase diagram, namely the antiferromagnetic phase (AFM) and the ferromagnetic phase (FM). The color plot displays the degree to which spins are ordered successively in an anti-parallel fashion, namely the magnetization $M$:

\begin{align}
\label{eq:magnetization_square_lattice}
M = \frac{1}{N}\sum_{i} (-1)^{x_i+y_i}\langle \hat{\sigma}^z_i\rangle_0,
\end{align}
where $N$ represents the total number of array sites, $x_i$ and $y_i$ are integers accounting for the Cartesian components of the orthonormal Bravais lattice vectors of a square lattice. The white box appearing on Fig.~\ref{fig:phase_diag_resolved_5x5} encloses the parameter region of the Hamiltonian \eqref{eq:H}, from which are drawn the physical parameters, mapped to graphs, that will constitute the training dataset of the GNN. The region was selected to span over domains of distances and transverse fields representative of the experimental setups. To give ideas of orders of magnitude, considering that $\bar{R}_{\langle i,j\rangle}\simeq 10\mu\text{m}$, the interaction coefficient showing up in Eq.~\eqref{eq:H} is $C_6/\bar{R}^6_{\langle i,j\rangle}\sim 10 \ \text{rad}\cdot \mu s^{-1}$, hinting on the whereabouts of the phase transition. By defining $\Omega_c$ the critical Rabi frequency above which the magnetization \eqref{eq:magnetization_square_lattice} falls below $10^{-1}$, for $\bar{R}_{\langle i,j\rangle}\simeq 10\mu\text{m}$, we obtain $\Omega_c\simeq 17 \ \text{rad}\cdot \mu s^{-1}$ (the phase diagram is symmetric about $\Omega=0$).

\begin{figure}[h!]
  \centering
    \includegraphics[width=0.95 \columnwidth]{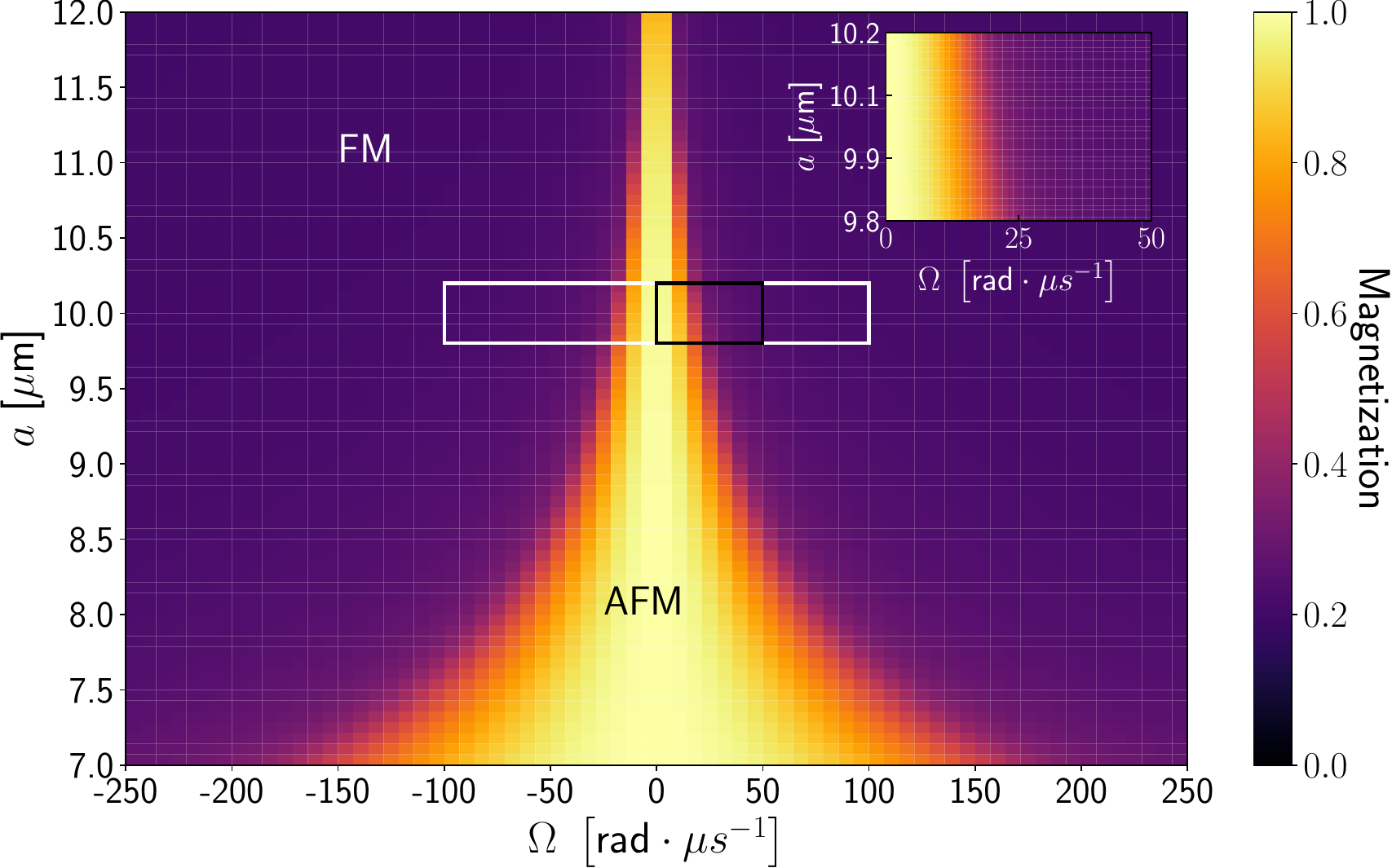}
      \caption{\textbf{Phase diagram of the Rydberg TFIM} for a $5\times 5$ square lattice. The white box encloses the parameter sets that can possibly be drawn during the training of the GNN. Along the $y$-axis, $a$ represents the NN distance between atoms. The inset plot zooms into the region of the phase diagram enclosed within the black box. Note that the phase diagram was calculated for an equally-spaced ordered lattice.}
  \label{fig:phase_diag_resolved_5x5}
\end{figure}

We then zoom in on the vicinity of the phase transition in Fig.~\ref{fig:phase_diag_resolved_5x5} (inset plot) to see better the variation of the magnetization. The range of nominal NN distances $a$ corresponds to the height of the white box appearing in Fig.~\ref{fig:phase_diag_resolved_5x5}. Note that its height encompasses all the possible $a$'s given that $\Delta R_{\langle i,j\rangle} \in \left[-0.2,0.2\right]\mu$m. In the region with a significant color gradient, the variance of the NN spin correlation functions is highest, as will be seen later in Fig.~\ref{fig:NN_corr_variance_delta}. $\Omega \gtrsim 10\,\text{rad}\cdot\mu s^{-1}$ corresponds to the point where the Rabi frequency and interaction terms are of the same order of magnitude, and thus, there is more variability in the ground state.

\subsection{Predictions from the exact wave function}
\label{subsec:predictions_from_exact_wave_function}

In Sec.~\ref{sec:discriminating_between_training_scenarios}, we show the scalability power of the GNN in the Hamiltonian learning task. We also resort to the metrics introduced in Sec.~\ref{subsubsec:metrics} to compare against each other the various GNN training scenarios laid out in Table.~\ref{table:framed}. In particular, we dress up the graph inputs to the GNN with exact physical correlation functions estimated from the MPS, encoding the full ground-state distribution function via its amplitudes; thus throughout this section, the GNN is trained over exact training datasets. Furthermore, in Sec.~\ref{sec:effect_Omega_histories}, we study how the choice of the $\Omega$ impacts the predictive and extrapolation power of the GNN. Finally, in Sec.~\ref{sec:additional_studies}, we summarize the results of additional studies from Appendices~\ref{app:amount_training_datasets_training_datasets}-\ref{app:target_optimization}.

\subsubsection{GNN scalability and comparison between training scenarios}
\label{sec:discriminating_between_training_scenarios}

We illustrate in Fig.~\ref{fig:metrics_training_GNN} the indicative metrics for all training scenarios covered in Sec.~\ref{sec:Machine_learning_technique} (see Table.~\ref{table:framed}), following through the figure layout described in Sec.~\ref{subsubsec:metrics}. The error bars show the standard error on the GNN metric results based on a set of five independently trained GNNs within each case scenario. The main takeaway is that in every scenario the GNN exhibits an impressive extrapolation ability to larger physical system sizes than the ones included in the training dataset. In particular, we trained here the GNN on data coming from physical system sizes of $\{4\times 4,5\times 5,6\times 6\}$ and the GNN successfully extrapolated to sizes up to $\{9\times 9\}$ (which constituted a numerical limit for our DMRG calculations) with only a small impact on its predictive error. Its successful extrapolation to rectangular arrays is also notable.

From looking at Fig.~\ref{fig:metrics_training_GNN}, it seems like the GNN does not benefit from physical information provided by the various correlations functions, namely the NNN spin correlation function and local magnetization, as much as from the architectural add-ons like the NNN edges; it is even more obvious when the local magnetization is set to $1$ (case $\#5$ in Table~\ref{table:framed}). Nonetheless, including the physical information coming from $\chi^{\text{NNN}}$ is not without benefits, as can be seen from comparing the case $\#3$ (extra $\chi^{\text{NNN}}$) with $\#2$ since the former case outperforms the latter as far as MAE~\eqref{eq:mean_absolute_error} and MEDAE are concerned. By very far, the worst GNN training implementation is case $\#1$, where absolutely no physical edge information -- no NN nor NNN correlation function -- is included in the graphs of the dataset. When adding the X-correlators (Eq.~\eqref{eq:NN_spin_corr_X}) like done in case $\#6$, the prediction power of the model can be increased further, amassing more physical input through the training, at the cost of a higher standard error on the extrapolation of large cluster sizes.

\begin{figure}
  \centering
    \includegraphics[width=0.95\columnwidth]{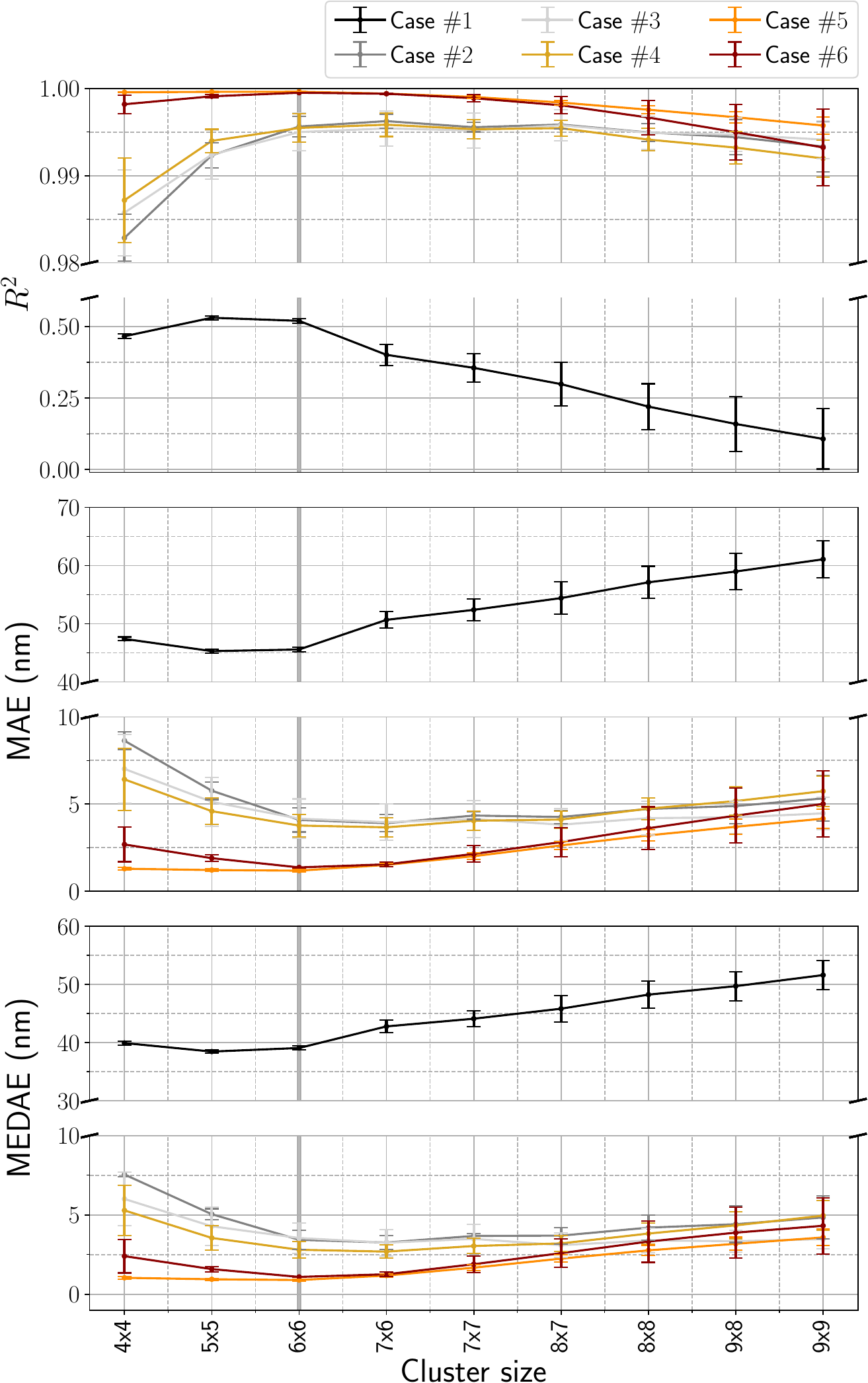}
      \caption{\textbf{Performance of the GNN training scenarios.} Top panel: $R^2$ for various cluster sizes in the test dataset. Middle panel: Mean absolute error of predictions. Bottom panel: Median of the absolute error of predictions vs targets. The vertical grey line indicates the cluster size beyond which the GNN extrapolates from its training dataset. Note that all cases use the same $\Omega$-histories $\vec{\Omega}$. The cases are detailed in Table~\ref{table:framed}. The error bars show the standard error on five distinct and independent trainings.}
  \label{fig:metrics_training_GNN}
\end{figure}

Moreover, it looks like the additional physical information to $\chi^{\text{NN}}$ is detrimental to the predictions of the GNN for the cluster sizes part of the training dataset (before the grey vertical line); this can be seen in all three measures shown in Fig.~\ref{fig:metrics_training_GNN}. This extra information also seems to negatively impact the error for large cluster sizes: the smallest error is obtained using the case $\#5$ while the largest is obtained using the case $\#6$. The GNN is also more faithful to the bijection relation described in Sec.~\ref{app:bijection_proof} when no physical observables are supplemented to NNN graph edges or nodes. Indeed, compared to case $\#5$, both case $\#4$, corresponding to the addition of local magnetizations to nodes, and case $\#3$, corresponding to the additions of local magnetizations to nodes and $\chi^{\text{NNN}}$ to NNN graph edges, diminish the performance.

\begin{figure}[t]
  \centering
    \includegraphics[width=0.95\columnwidth]{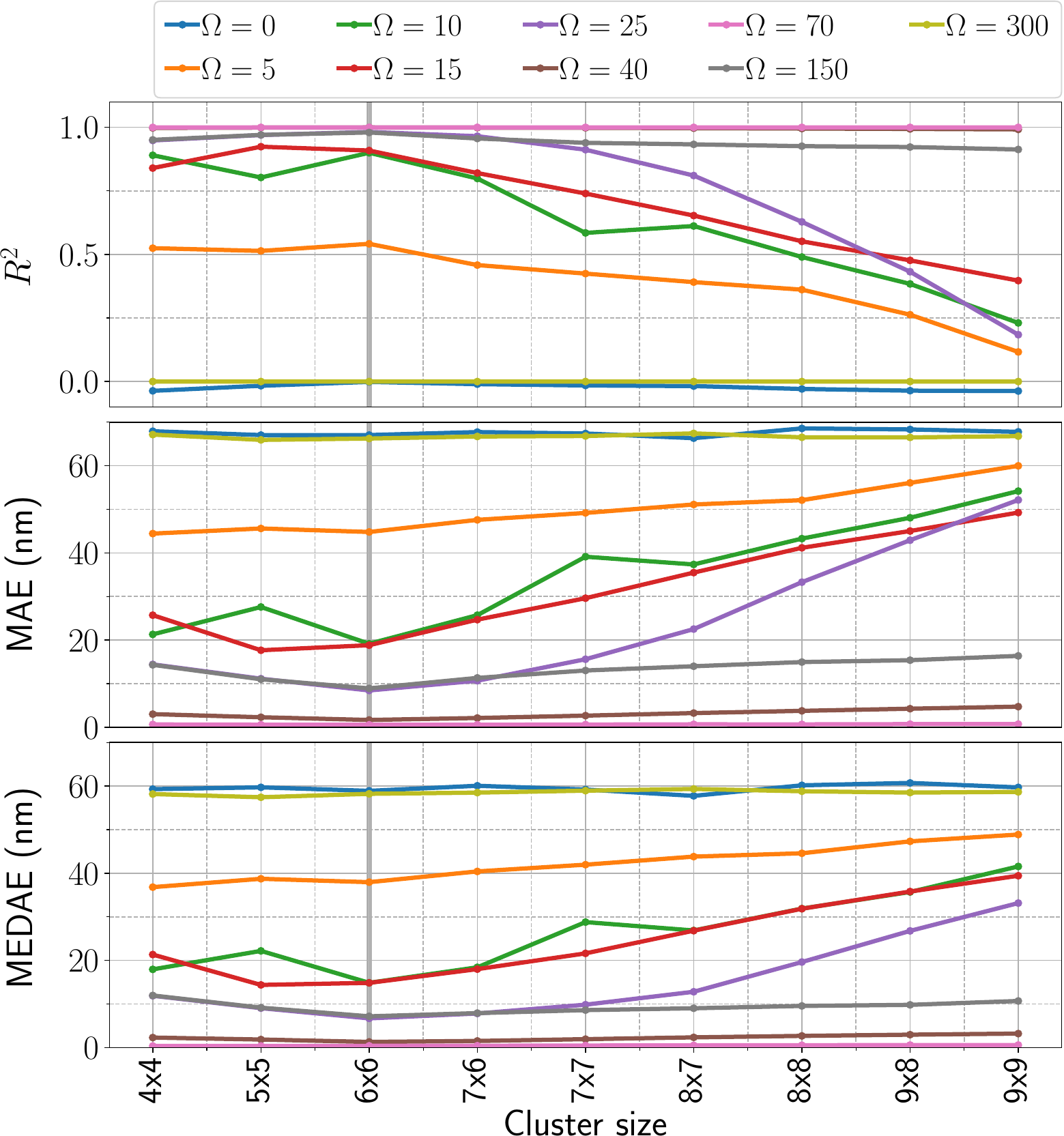}
      \caption{\textbf{$\Omega$-resolved performance of the training case $\#3$.} Top panel: $R^2$ as a function of Rydberg cluster sizes at various values of $\Omega$ (see legend). Middle panel: Mean absolute error of the predicted nearest-neighbor (NN) relative displacements. Bottom panel: Median of the mean absolute error of the relative NN displacements. $\Omega$ is shown in units of $\text{rad}\cdot \mu s^{-1}$.}
  \label{fig:metrics_deltas}
\end{figure}

\subsubsection{Effect of $\vec{\Omega}$-histories}
\label{sec:effect_Omega_histories}

In the training scenarios covered in Fig.~\ref{fig:metrics_training_GNN}, we only considered cases where histories of $\Omega$ are used to enrich both the edge and node features in the graphs. Of course, one can wonder what happens if one selects solely one value of $\Omega$ and if some $\Omega$'s have higher information content for the Hamiltonian learning task than the others. For that matter, we show in Fig.~\ref{fig:metrics_deltas} a collection of single $\Omega$ used to train the GNN following case $\#3$, in that the physical observables like $M_i$, $\chi^{\text{NN}}$ and $\chi^{\text{NNN}}$ are used as node and edge attributes, but the dataset is limited to only one $\Omega$ instead of a $\Omega$-history. From looking at Fig.~\ref{fig:metrics_deltas}, it is striking to see how the training performance hinges on the selected value of $\Omega$. We nevertheless can confidently state that the values of $\Omega$ chosen above $\Omega\simeq C_6/\bar{R}^6_{\langle i,j\rangle}\sim 10 \ \text{rad}\cdot \mu s^{-1}$, although not too far away, give significantly better results. Furthermore, above that threshold, in the inset plot of Fig.~\ref{fig:phase_diag_resolved_5x5}, for a fixed value of $\Omega$, the magnetization $M$~\eqref{eq:magnetization_square_lattice} varies more within the same set of $a$'s.

\begin{figure}[t]
  \centering
    \includegraphics[width=0.95\columnwidth]{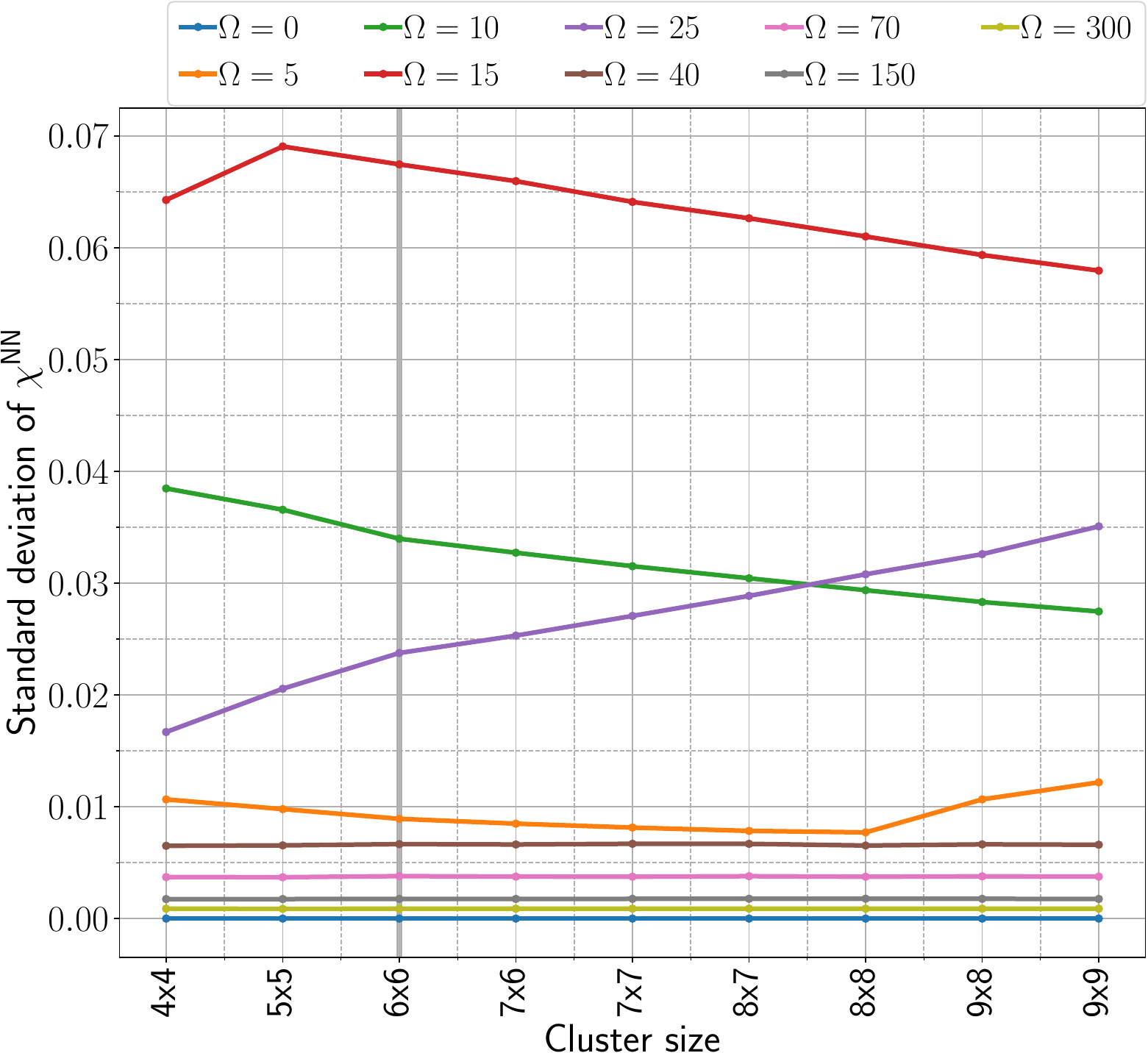}
      \caption{\textbf{Standard deviation of the $\chi^{\text{NN}}$} as a function of cluster size for a set of transverse fields $\Omega$. The grey vertical bar pins down the size above which the GNN extrapolates. The largest variances, across all cluster sizes, are obtained when $J_{\langle i,j\rangle} \approx \Omega$. $\Omega$ is shown in units of $\text{rad}\cdot \mu s^{-1}$.}
  \label{fig:NN_corr_variance_delta}
\end{figure}

To better understand why some values of $\Omega$ perform better than others in Fig.~\ref{fig:metrics_deltas}, we plot in Fig.~\ref{fig:NN_corr_variance_delta} the standard deviation (STD) of $\chi^{\text{NN}}$, denoted $\text{STD}[\chi^{\text{NN}}]$, as a function of the cluster sizes for all $\Omega$'s considered. In the range $\Omega \simeq \left[10,25\right] \ \text{rad}\cdot\mu s^{-1}$, fluctuations in the correlation function surpass that of all other $\Omega$'s by an order of magnitude at least, signaling the magnetic phase transition in the vicinity. Moreover, in that regime, the trend of $\text{STD}[\chi^{\text{NN}}]$ is far from being constant with the cluster size. Indeed, spin interactions in the critical regime become more long-range, which renders the training more size-dependent and greatly reduces extrapolation performance at larger cluster sizes -- physics is different between cluster sizes. Even for $\Omega = 25 \ \text{rad}\cdot\mu s^{-1}$, the fluctuations increase for all values of cluster sizes, in line with the critical behavior of correlation functions when passing a quantum phase critical point. When the quantum critical point is passed, but the standard variation of $\chi^{\text{NN}}$ remains roughly within $1\times10^{-2}\gtrsim\text{STD}[\chi^{\text{NN}}]\gtrsim 1\times10^{-3}$, this is where we see an optimal gain in performance. Hence, when $\Omega > J_{\langle i,j\rangle}$, but not too much, one finds a sweet spot. However, one could generally take a $\Omega$-history for $\Omega > J_{\langle i,j\rangle}$. The values of $\Omega = \{0, 5, 300\} \ \text{rad}\cdot\mu s^{-1}$ all feature a $\text{STD}[\chi^{\text{NN}}]$ that falls below $1\times10^{-3}$, too small for the GNN to pick up useful signals.

\subsubsection{Additional studies}
\label{sec:additional_studies}

We also study the GNN extrapolation dependence on the physical system sizes included in the training dataset. In Appendix~\ref{app:amount_training_datasets_training_datasets}, we show that unsurprisingly the GNN trained on a training dataset including a larger variety of system sizes performs and extrapolates better than when trained on smaller physical systems. Surprisingly, the smaller training datasets diminish the GNN performance the strongest in the regime of small physical systems, indicating the difficulty of learning the edge effects of physical systems. Moreover, in Appendix~\ref{app:target_optimization} we show that it is possible to predict also NNN distances between atoms, but it requires larger physical systems included in the training set to keep the same GNN performance. This hints at the existence of a critical physical system size that needs to be included in the dataset for the GNN to extrapolate properly, and its dependence on the locality of the Hamiltonian.

\subsection{Predictions from snapshots}
\label{sec:predictions_snapshots}

In this section, we ask how the GNN performs when the input correlation functions are estimated from a collection of successive projective measurements of the wave function, \textit{i.e.}, trained from snapshot training datasets instead of exact training datasets. Note that all the assertions made in the previous sections carry over to this training since the GNN does not change, nor does the physics. Here, we still train the GNN with a training dataset whose samples are made up of correlation functions (local and nonlocal) estimated from a number of snapshots (projective measurements of the wave function), nicknamed snapshot data. One could naively think that predicting the NN relative distances of snapshot data with the GNNs trained with exact correlators (directly measured in the ground state) would lead to a better outcome as opposed to the case where the GNNs are trained with snapshot data. We tried predicting the relative distances from snapshot data using a GNN model trained with exact correlators, but it produced worse results compared to a GNN model trained on the snapshot data (not shown). In fact, when measuring directly from the wave function, no errors coming from the finite number of projective measurements spoil the training dataset, and the GNN can thereof dedicate all its variational parameters to `learn' the bijection presented in Sec.~\ref{app:bijection_proof}. In Sec.~\ref{sec:comparison_best_training_scenarios}, we show how the GNN performance depends on the number of snapshots used to estimate correlation functions. In Sec.~\ref{sec:comparison_GNN_MLP}, we compare the performance of the GNN and the MLP on the snapshot data to determine the advantage of the graph preprocessing. Finally, in \seclab~\ref{sec:snapshot_size_effect}, we study whether the GNN error on snapshot data scales according to the central limit theorem.

\begin{figure}[h!]
  \centering
    \includegraphics[width=0.95\columnwidth]{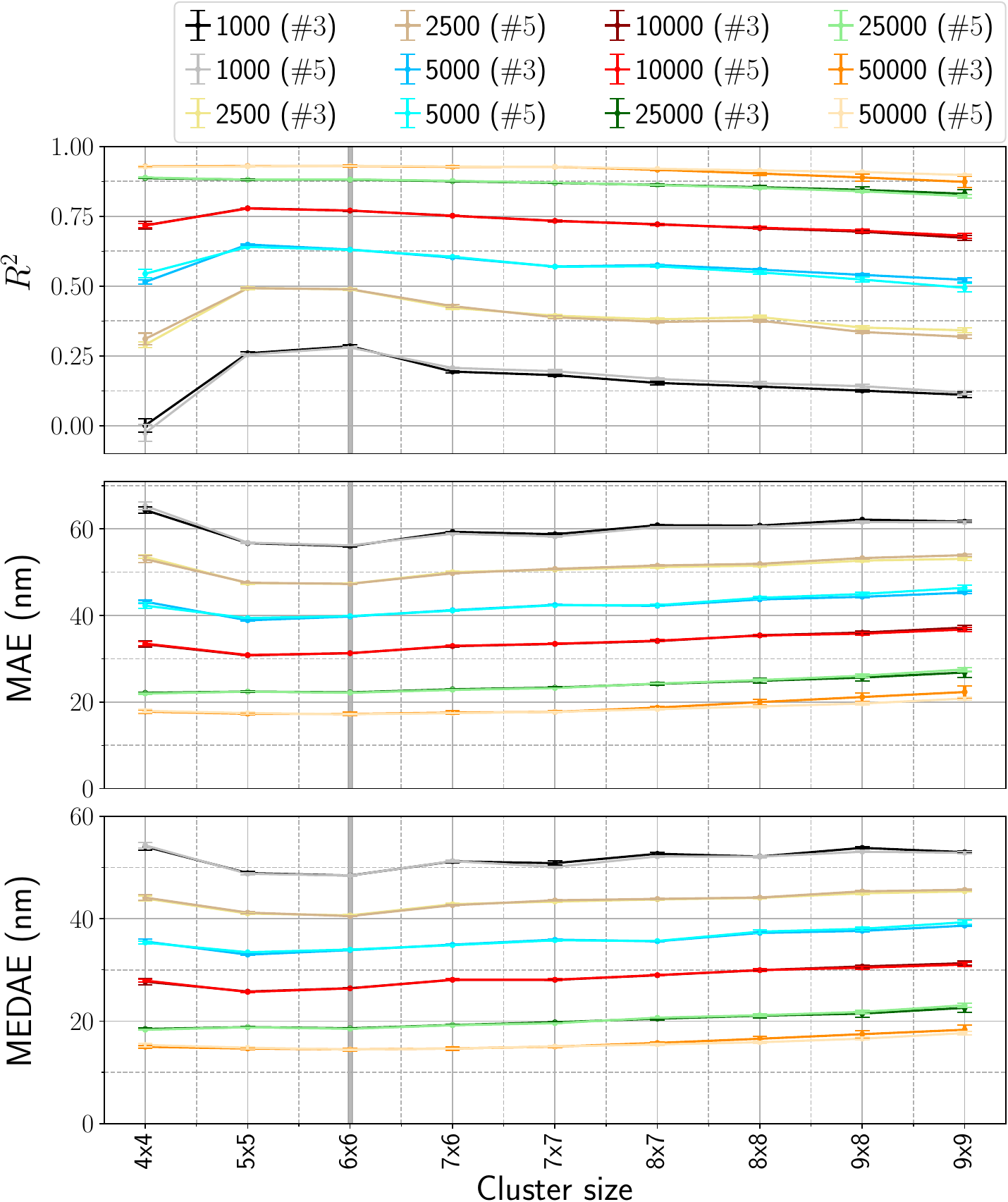}
      \caption{\textbf{Performance of training cases $\#3$ and $\#5$ using snapshot data.} A view of the metrics so far considered as a function of the cluster size. The two best cases, namely cases $\#3$ and $\#5$, have been retained. The dataset has been generated using a finite number of snapshots to estimate the observables, namely $N_{\text{smpl}}=1000, 2500, 5000, 10000, 25000$, and $50000$ snapshots. The error bars show the standard error on five distinct and independent trainings. The vertical gray lines indicate the point below which cluster sizes are considered in the snapshot training dataset.}
  \label{fig:metrics_SNPT}
\end{figure}

\subsubsection{Comparison between best training scenarios}
\label{sec:comparison_best_training_scenarios}

The two best cases computed using Z-observable measurements, presented in Sec.~\ref{sec:results}, are compared against each other in Fig.~\ref{fig:metrics_SNPT}, namely the cases $\#3$ and $\#5$ (see Table~\ref{table:framed}). Furthermore, since when dealing with snapshots to estimate correlation functions like in experiments one should be able to carry out measurements in the X-basis as well (see Eq.~\eqref{eq:NN_spin_corr_X}), we check the GNN performance when combining Z- and X-correlators in Fig.~\ref{fig:metrics_SNPT_ZX}. The colors were carefully chosen in Figs.~\ref{fig:metrics_SNPT} and \ref{fig:metrics_SNPT_ZX} to match the snapshot sample size.

Looking first at Fig.~\ref{fig:metrics_SNPT}, overall, one notices that the case $\#5$ slightly underperforms case $\#3$, especially for smaller $N_{\text{smpl}}$'s. The architectural benefits seem to be washed out when training with snapshot data, as is clearly apparent when looking back at the case $\#5$ in Fig.~\ref{fig:metrics_training_GNN}. Evidently, increasing the number of snapshots is very beneficial to the training. When comparing Fig.~\ref{fig:metrics_training_GNN} to Fig.~\ref{fig:metrics_SNPT}, having in mind the bijection proof detailed in Appendix~\ref{app:bijection_proof}, it seems obvious the almost-perfect rendering of the GNN (see cases $\#3$, $\#5$ and $\#6$ in Fig.~\ref{fig:metrics_training_GNN}) is unlocked due to the exact and precise correlators provided, and that the main limitation seen in Figs.~\ref{fig:metrics_SNPT} and \ref{fig:metrics_SNPT_ZX} relates to the statistically diluted physical content in snapshots. 

\begin{figure}[t]
  \centering
    \includegraphics[width=0.95\columnwidth]{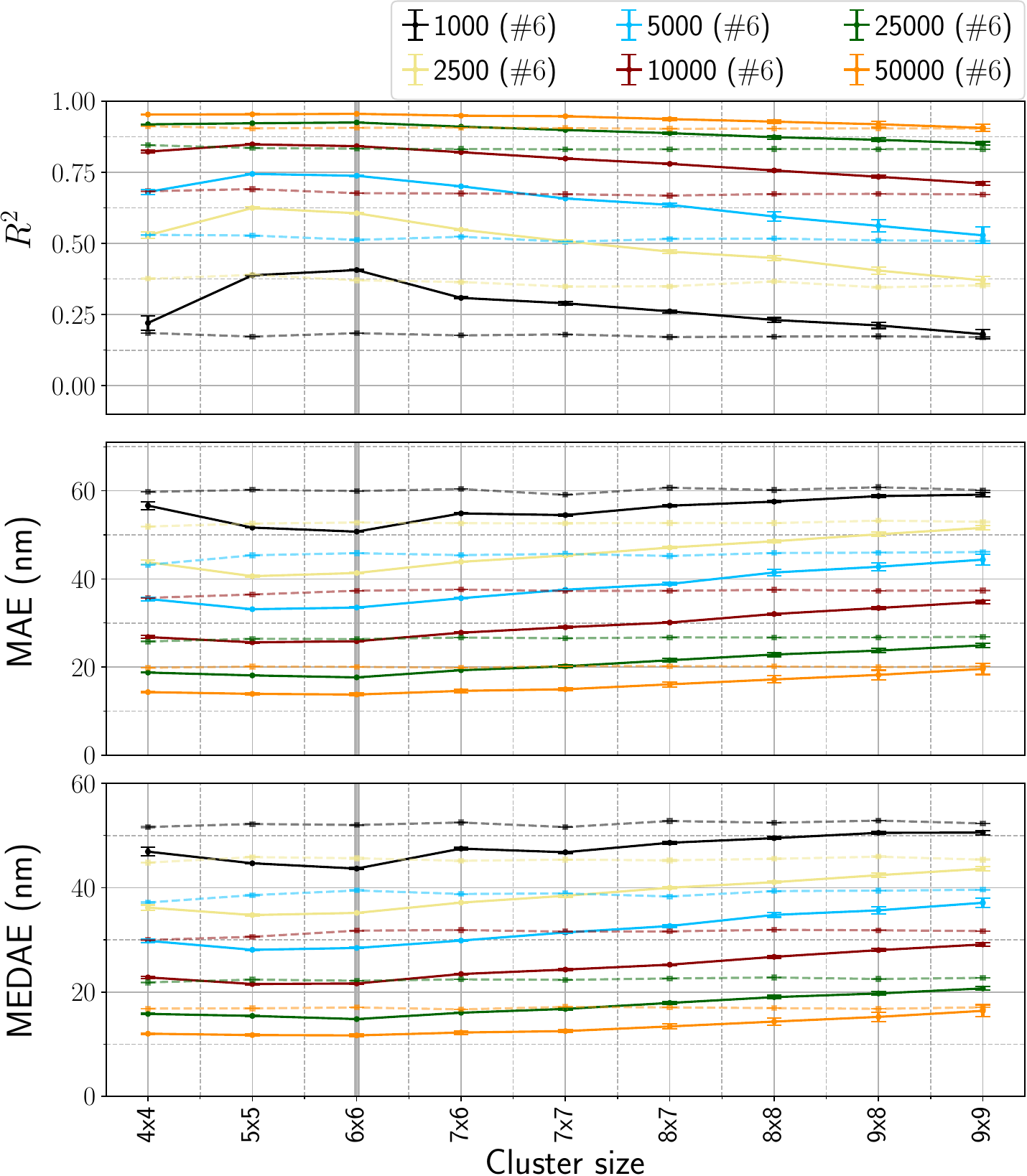}
      \caption{\textbf{Performance of the training case $\#6$ using snapshot data.} A view of the metrics so far considered as a function of the cluster size. The case $\#6$ is considered. The dataset has been generated using the same finite number of snapshots as in Fig.~\ref{fig:metrics_SNPT}. The error bars show the standard error on five distinct and independent trainings. Again, the gray vertical line indicates the cluster size threshold above which neural network models extrapolate. The dotted curves with corresponding colors are predictions from an MLP (including error bars).}
  \label{fig:metrics_SNPT_ZX}
\end{figure}

To remedy the poorer information content in a snapshot dataset, we show in Fig.~\ref{fig:metrics_SNPT_ZX} how the GNN performs if we include both X- and Z-observable measurements. Just like observed in Fig.~\ref{fig:metrics_training_GNN}, we get a substantial improvement on the metrics considered, especially the MAE and MEDAE, when the $\chi^{\text{NN}}$ and $\chi^{\text{NNN}}$ correlators are computed with the Z and X Pauli strings. Doing so, the MAE and MEDAE hover around a discrepancy of $10\%$ on snapshot data utilizing 10000 projective measurements. However, the combination of the Z- and X-observables turns out not to be as beneficial as the doubling of $N_{\text{smpl}}$.

\subsubsection{Comparison between the GNN and the MLP performance}
\label{sec:comparison_GNN_MLP}

In Fig.~\ref{fig:metrics_SNPT_ZX}, we compare the predictive errors and scaling of the GNN (full lines) and a simple MLP of the same architecture as the GNN's task network (dashed lines). The same MLP is trained to predict an NN distance between a pair of atoms from the NN spin-spin correlation for all the atoms, disorder realizations, and cluster sizes included in the training set.
The resulting MLP has a low predictive error, which is likely due to the bijection proved in Appendix~\ref{app:bijection_proof}.
Moreover, the MLP performance is almost constant across cluster sizes and is comparable to the performance of the GNN at the largest cluster sizes.

At the same time, the GNN outperforms the MLP over the range of cluster sizes considered in this study. This gain shows the advantage of graph preprocessing, which can lead to learning the edge effects and more accurate predictions for cluster sizes, where edge effects play a role.

\begin{figure}[t]
  \centering
    \includegraphics[width=0.95\columnwidth]{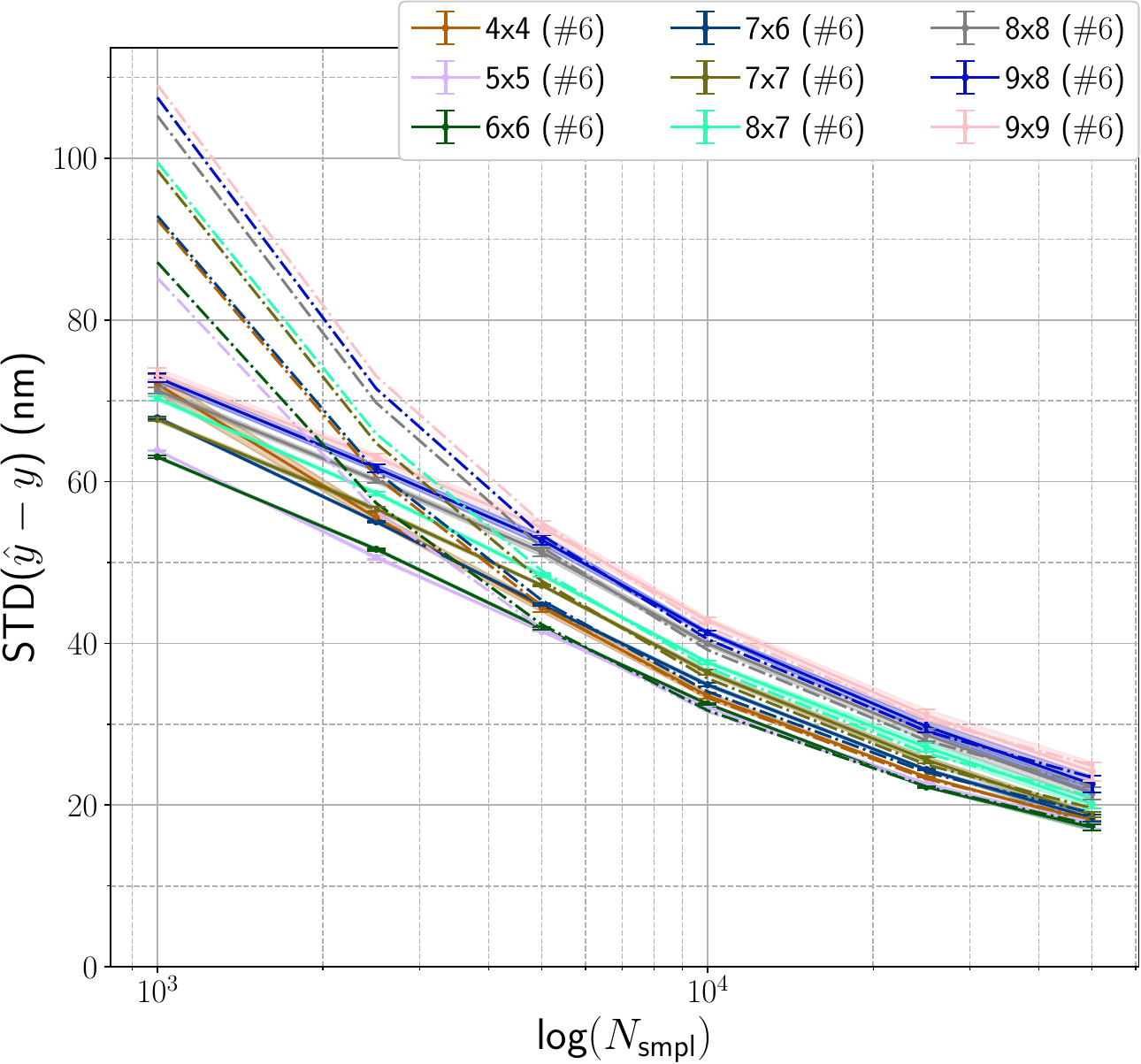}
      \caption{\textbf{Effect of the snapshot sample size $N_{\text{smpl}}$.} Standard deviations (STDs) of the difference between NN relative distance predictions and targets for case $\#6$ using snapshots (solid curves) for all cluster sizes considered. The corresponding dotted curves whose color matches that of the solid curve show how the STD would evolve with the sample size according to the central limit theorem, taking the snapshot data with $N_{\text{smpl}}=\{5000,10000,25000,50000\}$ as reference. The error bars show the standard error on five distinct and independent trainings of the PNA GNN.}
  \label{fig:STDS_SNPT_ZX}
\end{figure}

\subsubsection{Snapshot size effect}
\label{sec:snapshot_size_effect}

To check if the GNN performance on snapshot data correlates with the dataset size according to the central limit theorem, we plot in Fig.~\ref{fig:STDS_SNPT_ZX} the standard deviation of the difference between the network predictions and targets for various snapshot sample sizes ($N_{\text{smpl}}$) for all cluster sizes (solid curves). Additionally, we use the snapshot data with $N_{\text{smpl}}=\{5000,10000,25000,50000\}$ to fit the power-law scaling $\propto 1/\sqrt{N_{\text{smpl}}}$ of the central limit theorem (dotted curves with corresponding colors). Since the validity of the central limit theorem holds for large sample sizes, the profile of the standard deviation of the snapshot data with $N_{\text{smpl}}=\{1000,2500\}$ deviates from the central limit theorem. Without much surprise, the standard error between the predictions and truths diminishes with the number of snapshots $N_{\text{smpl}}$ used to calculate the observables. For smaller values of $N_{\text{smpl}}$, the discrepancy between the solid and dotted curves signals that the GNN is trying to learn something from the statistical uncertainty arising from snapshot data, leading to noise overfitting.

Estimating large covariance matrices from finite samples is an area that received a lot of attention in the last few decades~\cite{lam2020high}. Truncating the eigenvalue decomposition of the empirical covariance matrix (connected correlators), based on random matrix theory, is one obvious procedure that has been in use in applications from image denoising to portfolio construction~\cite{sengupta1999distributions,laloux1999noise,laloux2000random}. More recently, further progress has been made on shrinkage-based methods~\cite{ledoit2022power}. Furthermore, one could draw snapshots from an MPS state in random computational bases, \textit{i.e.} randomly applying a unitary rotation on each site before doing a measurement in the Z-basis, since this can help the estimation of observables to convergence~\cite{stoudenmire2010mpssampling}. In the future, we will investigate whether these methods improve our network's performance on snapshot data. 

\section{Conclusions and discussion}\label{sec:conclusions}

In this work, we have made use of the Principal Neighbor Aggregation (PNA) algorithm to predict the positions of Rydberg atomic arrays in 2D cold atom arrays. The training dataset was produced by simulating the ground state of a transverse-field Ising model using the Density Matrix Renormalization Group (DMRG) algorithm. Overall, the graph neural network (GNN) employed demonstrated a clear capacity to extrapolate outside its training domain -- also the multi-layer perceptron (MLP) considered at the end of the work --, be it in terms of system sizes or shape. The GNN and MLP scalability is probably enabled from the quantum system's side by the effective locality of the studied Hamiltonian \cite{buhrman2024beatinggroversearch} (decay of the spin coupling as $1/R^6_{i,j}$). The training input data was the local magnetization, the nearest-neighbor (NN) and next-nearest-neighbor (NNN) correlation functions, and the targets were NN and NNN relative distances between atoms in the array, depending on the training case scenarios (see Table~\ref{table:framed}). To narrow down the best training conditions for the GNN, various training procedures were followed, each with its own efficiency. The correlators were all computed in the Z measurement basis and in some cases Z- and X-correlators were used.

Depending on the number of experimental snapshots that can be generated, if one does not have knowledge of the best transverse field $\Omega$ to train the GNN with, it turns out that generating samples with the same array disorder across different values of $\Omega$ helps tremendously. The values of $\Omega$ making up this array should be selected roughly above, but not too far, from $\Omega\simeq C_6/\bar{R}^6_{\langle i,j\rangle}$, where $C_6$ represents the van der Waals interaction and $\bar{R}_{\langle i,j\rangle}$ represents the nominal NN distance between atoms. Alternatively, active learning could be employed to identify optimal $\Omega$'s to train the GNN~\cite{Dutt2023activeHlearn}.

The GNN performance depends on the physical data it utilized during the training. In high-quality numerical/experimental setups, only the NN spin correlation function, measured in one single measurement basis, should suffice to predict in a one-to-one mapping the NN relative distances between the Rydberg atoms. This is what is seen when exact correlators are used (directly computed from the full wave function). However, when using snapshot data, uncertainties occur and significantly reduce the performance of the GNN. To reduce this unwanted spoiling of the data, a combination of both Z- and X-correlators is used. In doing so, to get average predictions over the NN relative distances well below 10$\%$ discrepancy, more than 10000 projective measurements need to be carried out to estimate the spin-spin correlators. However, increasing the number of snapshots is essential to improve the performance. We also showed, by comparing the GNN performance to that of the MLP, that the graph preprocessing done by the GNN layer allows the inclusion of edge effects and provides more accurate predictions for cluster sizes, where edge effects play a role.

On the technical side, some architectural properties of the GNN can be exploited to boost the model's performance. For instance, enabling NNN edges in input graphs without assigning physical values to their features is of great help. This way, `skip connections' are created through which the neural network can back-propagate and help the flow of information get through the neural layers. The local magnetization provides little information to the training and should also be set to identity. Including more cluster sizes in the training datasets significantly improves the extrapolation capabilities of the GNN, although three small cluster sizes seem to suffice to predict much larger cluster sizes of atoms with very high fidelity (with the exact correlators).

Based on this systematic study of the PNA GNN, we can easily identify various avenues to fine-tune the neural network and tailor it to specific needs. The points brought up in what follows should further enhance the GNN performance in predicting the NN relative distances. As a first point, \textbf{i}) one could add more cluster sizes to the training dataset $\{4\times 4,5\times 5,6\times 6\}$ considered here, such as sizes $4\times 5$, $5\times 6$, etc. Secondly, \textbf{ii}) one could increase the number of correlators to include longer-range ones, thereby amplifying the over-completion of the task of inferring NN relative distances. Thirdly, \textbf{iii}) one could enable higher-order graph edges that are not associated with any correlator in the datasets since this has proven to be beneficial (cases $\#4$ and $\#5$) to the performance; these `weightless' edges would act like `skip connections' in the back-propagation of the gradients during the GNN optimization. Lastly, \textbf{iv}) the $\vec{\Omega}$-history used to produce the training dataset could be more carefully chosen to remain near a quantum phase transition with an increased number of $\Omega$ slices. 

We are currently exploring applications of the method described in this article to time-dependent experimental data on Rydberg atomic arrays. Looking forward, the GNN (and MLP) could be trained on various types of Hamiltonians~\cite{nandy2024reconstructing} and, therefore, predict many possible Hamiltonian terms and parameters, e.g., identifying unknown sources of noise from the measurement data instead of only predicting parameters of a pre-defined Hamiltonian. To achieve a larger generality of the GNN architecture described in this work, one needs to include many-body edges in the GNN layers and task networks whose input are multiple nodes. It is also interesting to consider the limitations of the GNN scalability to larger system sizes and whether it depends on the Hamiltonian locality (power-scaling of the interaction decay). Eventually, one could envision feedback control of the experimental setup to reduce uncertainties on the simulated Hamiltonian.

\begin{acknowledgments}
We are grateful to Antoine Browaeys, Thierry Lahaye, and their group at IOGS for getting us interested in this problem 
and for useful discussions. We thank Krzysztof Maziarz for suggesting Principal Neighborhood Aggregation as the GNN architecture. We also thank Lucy Reading-Ikkanda for the graphical design of Figs.~\ref{fig:intro} and~\ref{fig:GNN}. O.S. thanks Sékou-Oumar Kaba for useful discussions regarding the GNN implementations. A.D. thanks Liubov Markovich for the discussions on Hamiltonian learning techniques. O.S. acknowledges support from the Swiss National Science Foundation through the Postdoc.Mobility fellowship. A.D. acknowledges support from the Dutch National Growth Fund (NGF), as part of the Quantum Delta NL programme. This work was granted access to the HPC resources of TGCC and IDRIS under the allocation A0170510609 attributed by GENCI (Grand Equipement National de Calcul Intensif). It has also used high performance computing resources of IDCS (Infrastructure, Données, Calcul Scientifique) under the allocation CPHT 2024. This work is part of HQI initiative (\url{www.hqi.fr}) and is supported by France 2030 under the French National Research Agency award number ANR-22-PNCQ-0002.
The Flatiron Institute is a division of the Simons Foundation.
\end{acknowledgments}

\section*{Data availability}

The DMRG datasets presented in this work is partly available on Zenodo via this DOI \texttt{10.5281/zenodo.15525077}, along with all the plotting scripts and accompanying data to reproduce the figures. The DMRG code used to produce the samples can be provided upon request to the authors. The instructions to reproduce the figures and train the GNN with the provided DMRG dataset are available through this \href{https://github.com/oliviersimard/Hamiltonian_Learning_GNN.git}{link}.

\section*{Code availability}

The DMRG code used to generate the datasets is available upon request to the authors and was written using the library ITensors.jl~\cite{10.21468/SciPostPhysCodeb.4}. The PNA GNN codes to train and test upon the datasets are available on \url{https://github.com/oliviersimard/Hamiltonian_Learning_GNN.git}. The Python scripts used to produce the figures are also available on Zenodo via this DOI available on Zenodo via this DOI \texttt{10.5281/zenodo.15525077}. The GNN code was written using utilities from PyTorch Geometric~\cite{paszke_pytorch_2019}.

\bibliographystyle{apsrev4-1_our_style}
\bibliography{BIB_ML-in-Paris, additional-ref}

\clearpage

\appendix

\section{Proof of bijection between interactions and spin-spin correlators}
\label{app:bijection_proof}

In this section of the Appendix, we prove a general result that provides a theoretical foundation for the Hamiltonian reconstruction approach implemented in this work. In part, we show that it suffices to measure NN spin correlation functions to determine uniquely the NN relative distances between atoms and, therefore, the full set of interactions. We show that there is, in general, a bijective correspondence relating the matrix of spatially dependent correlation functions $c_{i,j}$ to the matrix of interactions $J_{i,j}$ between Ising spins in the TFIM. This result can be viewed as a generalization of the Hohenberg-Kohn (HK) theorem of density functional 
theory~\cite{HohenbergKohn,Garrigue2019HohenbergKohn}, which establishes a bijection between the local one-body potential and the local density of a system of interacting quantum particles, and as a quantum many-body generalization of the Henderson theorem~\cite{HENDERSON1974197}. We note that previous work has 
demonstrated that a neural network can indeed be trained to learn the HK correspondence in interacting fermion models~\cite{RobledoMorenoHK}. 

We write the TFIM Hamiltonian in the form:
\begin{equation}\label{eq:TFIM_bijection}
\ham = \Omega \sum_i \hat{\sigma}_i^x + \delta \sum_i \hat{\sigma}_i^z + \sum_{i<j} J_{i,j} \hat{\sigma}_i^z \hat{\sigma}_j^z
\end{equation}
where $i<j$ indicates that each link is counted only once. We define the correlation functions as: 
$c_{i,j} = \langle\hat{\mathbf{S}}_i^z \hat{\mathbf{S}}_j^z\rangle$, where $\langle \cdot\rangle$ indicates the expectation value and $\hat{\mathbf{S}}_i^z$ is the spin operator in the Z-basis acting on the $i$-th site, related to the Z Pauli matrix via $\hat{\mathbf{S}}_i^z = \frac{\hbar}{2} \mathbf{\hat{\sigma}}_i^z$. We prove that, for fixed $\Omega\neq 0$ and $\delta$, there is a unique set of interactions $J_{i,j}$ that yields a set $c_{i,j}$, and conversely: $J_{i,j} \Leftrightarrow c_{i,j}$. 
The condition $\Omega\neq 0$ is important since, for example, when $\Omega=\delta=0$, for all choices of $\{J_{i,j}\}$ with each individual $J_{i,j}<0$ the ground-state is the 
fully polarized ferromagnet and the $c_{i,j}$'s are obviously the same, irrespective of the specific values of the interactions. 

First, we need two properties of the ground states of the Hamiltonian when $\Omega\neq 0$. We use  $\ket{\sigma} \equiv \ket{\sigma_1,\cdots,\sigma_N}$ to denote the eigenstates of all the $\hat{\mathbf{S}}^z_i$'s. These properties are consequences of the stoquasticity \cite{bravyi2008complexity} of the TFIM, and the irreducibility \cite{crosson2017quantum} of the canonical ensemble density matrix in the standard basis. We give an explicit proof for completeness.
\begin{lemma}
\label{lemma:NonZeroOmega}
For the TFIM Hamiltonian $\ham$ defined by equation \ref{eq:TFIM_bijection} with $\Omega\neq 0$, we have the following properties:
\begin{enumerate}
  \item the ground state $\ket{\Psi}$ of $\ham$ is unique,
  \item $\langle\sigma|\Psi\rangle\neq 0$ for all $\ket{\sigma}$.
\end{enumerate}
\end{lemma}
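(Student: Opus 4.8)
The plan is to pass to the computational ($\hat\sigma^z$) basis $\{\ket{\sigma}\}$ and invoke the Perron--Frobenius theorem. In that basis the two Ising terms $\delta\sum_i\hat\sigma_i^z$ and $\sum_{i<j}J_{i,j}\hat\sigma_i^z\hat\sigma_j^z$ are diagonal, contributing only to $\bra{\sigma}\ham\ket{\sigma}$, while $\Omega\sum_i\hat\sigma_i^x$ is purely off-diagonal: $\bra{\sigma'}\ham\ket{\sigma}=\Omega$ whenever $\ket{\sigma'}$ and $\ket{\sigma}$ differ by a single spin flip, and $0$ whenever they differ at two or more sites. Thus $\ham$ is a real symmetric matrix whose off-diagonal support is exactly the adjacency structure of the $N$-dimensional hypercube $\{\pm1\}^N$, weighted by $\Omega$.

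First I would eliminate the sign ambiguity of $\Omega$ by conjugating with the Hermitian, unitary operator $U=\prod_i\hat\sigma_i^z$, which satisfies $U\hat\sigma_i^xU^\dagger=-\hat\sigma_i^x$ and $U\hat\sigma_i^zU^\dagger=\hat\sigma_i^z$; conjugation by $U$ flips $\Omega\to-\Omega$, leaves the diagonal terms (hence the $c_{i,j}$) untouched, and maps ground states to ground states bijectively. Moreover $U\ket{\sigma}=\pm\ket{\sigma}$, so it only changes each amplitude $\langle\sigma|\Psi\rangle$ by a sign. Hence it suffices to treat the case $\Omega<0$. I then set $M\equiv cI-\ham$ with $c>\max_{\sigma}\bra{\sigma}\ham\ket{\sigma}$, so that every diagonal entry of $M$ is strictly positive and every off-diagonal entry is either $0$ or $-\Omega=|\Omega|>0$; thus $M$ has non-negative entries. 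Because $\Omega\neq0$, every hypercube edge carries a nonzero weight, and the hypercube graph is connected (any configuration can be turned into any other by flipping one spin at a time), so $M$ is irreducible; with its strictly positive diagonal it is in fact primitive.

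Now Perron--Frobenius for primitive non-negative matrices yields precisely the two assertions. The spectral radius $\rho(M)$ is a simple eigenvalue of $M$, strictly dominant in modulus, whose eigenvector $\ket{\Phi}$ can be taken with all entries strictly positive, $\langle\sigma|\Phi\rangle>0$ for every $\ket{\sigma}$. Since $\ham=cI-M$, the vector $\ket{\Phi}$ is the eigenvector of $\ham$ of smallest eigenvalue $c-\rho(M)$, i.e. the ground state $\ket{\Psi}$ in the gauged frame: simplicity of $\rho(M)$ gives property~1, and positivity of its components gives property~2. Undoing the gauge multiplies each $\langle\sigma|\Psi\rangle$ by $\pm1$, which preserves both uniqueness and the statement $\langle\sigma|\Psi\rangle\neq0$.

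The only genuinely substantive points are checking the hypotheses of Perron--Frobenius: non-negativity of $M$, which is exactly what forces the preliminary gauge fixing (for $\Omega>0$ the matrix $cI-\ham$ has negative off-diagonal entries and the theorem does not apply directly), and irreducibility, which is where the assumption $\Omega\neq0$ enters, via connectivity of the hypercube. Everything else — identifying the off-diagonal elements as single-spin-flip amplitudes, the existence of a suitable $c$, and the correspondence between the Perron eigenvalue of $cI-\ham$ and the ground state of $\ham$ — is routine.
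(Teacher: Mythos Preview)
Your proof is correct and follows the same overall strategy as the paper --- gauge-transform to make the off-diagonal elements non-positive, then invoke Perron--Frobenius --- but your implementation is more direct on both counts. For the gauge, you use $U=\prod_i\hat\sigma_i^z$, which flips only $\Omega$ and acts diagonally on $\{\ket{\sigma}\}$; the paper uses $\prod_j(i\hat\sigma_j^y)$, which also flips $\delta$ and permutes the basis. For Perron--Frobenius, you apply it to the shifted matrix $M=cI-\ham$, whose non-negativity and irreducibility are immediate from the hypercube structure; the paper instead applies it to $V=e^{-\beta\ham}$ and establishes that $V$ has strictly positive entries by writing out the interaction-picture (stochastic-series) expansion and checking term by term. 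Your route avoids that expansion entirely and needs only the finite-dimensional linear-algebra statement of the theorem, at the cost of invoking irreducibility/primitivity explicitly rather than strict positivity. The paper's detour through $e^{-\beta\ham}$ is closer to how stoquasticity is usually phrased in the quantum Monte Carlo literature, but for the lemma as stated your argument is the cleaner one.
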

\begin{proof}
First, it is enough to prove the result for $\Omega<0$, since we can find a unitary transformation $U:=\prod_j (i\hat\sigma_j^y)$ that changes the sign of $\Omega$ and $\delta$ but leaves $\{J_{ij}\}$ intact. Since each $i\hat\sigma_j^y$ locally maps $\ket{\uparrow}$ to $-\ket{\downarrow}$ and $\ket{\downarrow}$ to $\ket{\uparrow}$ at the $j$-th site, the $\ket{\sigma}$'s are permuted among each other modulo an extra sign. Hence, the statement proved for the transformed Hamiltonian is true for the original sign.

Second, it is enough to prove that $V=e^{-\beta \ham}, \beta>0,$ is a positive matrix, namely it has all positive entries. Then, the Perron-Frobenius theorem \cite{perron1907theorie,frobenius1908matrizen,frobenius1912matrizen} tells us that the dominant (Perron) eigenvalue is simple (nondegenerate) and the corresponding (Perron) eigenvector has all positive components. Note that $V$ is Hermitian with positive eigenvalues with the Perron eigenvalue being $\exp(-\beta E)$ when $E$ is the ground-state energy of $\ham$ and the Perron eigenvector is the ground state $\ket{\Psi}$.
Now we break up $\ham=\ham_0+\Omega\sum_i\hat\sigma_i^x$ where
\begin{equation}\label{eq:H_0}
\ham_0 =\delta \sum_i \hat{\sigma}_i^z + \sum_{i<j} J_{i,j} \hat{\sigma}_i^z \hat{\sigma}_j^z
\end{equation}
We are ready to use the series expansion in the interaction picture (see e.g. Ref.~\cite{sandvik2019stochastic}) for $V$ entries:
\begin{align}
\langle\sigma'|V|\sigma\rangle = &\langle\sigma'|e^{-\tau \ham}|\sigma\rangle \nonumber\\
=
\sum_{n=0}^\infty(-\Omega)^n&\sum_{i_1,i_2\cdots,i_n}\int_{\tau_{n-1}}^\beta d\tau_{n}\cdots\int_{\tau_1}^\beta d\tau_2\int_{0}^\beta d\tau_{n}\nonumber\\
&\langle\sigma'|e^{-\beta \ham_0}\hat\sigma_{i_n}^x(\tau_n)\cdots\hat\sigma_{i_2}^x(\tau_2)\hat\sigma_{i_1}^x(\tau_1)|\sigma\rangle
\label{eq:SSE}
\end{align}
where $\hat\sigma_{i}^x(\tau):=e^{\tau\ham_0}\hat\sigma_{i}^xe^{-\tau\ham_0}$. This operator causes a spin flip at site $i$. This convergent series has every term that is nonnegative. Moreover between any $\ket{\sigma}$ and $\ket{\sigma^{\prime}}$, there is at least one term with at most $N$ $\hat\sigma_{i_k}^x$'s that is nonzero. Hence $V$ has all positive entries.
\end{proof}
Now we come to the main theorem, adapted to the TFIM, which assumes the $J$-representability of the connected correlation functions~\cite{PhysRevA.26.1200,ENGLISCH1983253}.

\begin{theorem}[Hohenberg-Kohn-Henderson Theorem for the TFIM]
For the TFIM as defined in Eq.~\eqref{eq:TFIM_bijection}, for fixed nonzero $\Omega$ and fixed $\delta$, there exists a bijection between the $J$-representable correlation functions 
$c_{i,j} = \langle\hat{\mathbf{S}}_i^z \hat{\mathbf{S}}_j^z\rangle$ and the set of interactions $J_{i,j}$, 
$J_{i,j} \Leftrightarrow c_{i,j}$.
\end{theorem}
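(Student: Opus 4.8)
The plan is to follow the classic Hohenberg--Kohn strategy, in two steps, using the two properties established in Lemma~\ref{lemma:NonZeroOmega} as the crucial inputs. \emph{Step 1: the map $J \mapsto \Psi$ is injective.} Suppose two interaction matrices $J^{(1)}_{i,j} \neq J^{(2)}_{i,j}$ (with the same fixed $\Omega \neq 0$ and $\delta$) had the same ground state $\ket{\Psi}$. Call the corresponding Hamiltonians $\ham^{(1)}$, $\ham^{(2)}$. Then $(\ham^{(1)} - \ham^{(2)})\ket{\Psi} = (E^{(1)} - E^{(2)})\ket{\Psi}$, since $\ket{\Psi}$ is an eigenstate of both. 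But $\ham^{(1)} - \ham^{(2)} = \sum_{i<j}(J^{(1)}_{i,j} - J^{(2)}_{i,j})\,\hat{\sigma}^z_i\hat{\sigma}^z_j$ is diagonal in the $\ket{\sigma}$ basis, so projecting onto any $\ket{\sigma}$ gives
\begin{equation}\label{eq:SamePsiContradiction}
\left(\sum_{i<j}(J^{(1)}_{i,j} - J^{(2)}_{i,j})\,\sigma_i\sigma_j - (E^{(1)}-E^{(2)})\right)\langle\sigma|\Psi\rangle = 0 .
\end{equation}
By property 2 of the Lemma, $\langle\sigma|\Psi\rangle \neq 0$ for \emph{every} $\ket{\sigma}$, so the bracket must vanish for all $2^N$ configurations. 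Since the functions $\{\sigma_i\sigma_j\}_{i<j}$ together with the constant are linearly independent as functions on $\{\pm1\}^N$ (indeed $2^N \gg N(N-1)/2 + 1$ for $N\ge 3$, and one checks the small cases directly), this forces $J^{(1)}_{i,j} = J^{(2)}_{i,j}$ for all pairs and $E^{(1)} = E^{(2)}$, a contradiction. Hence distinct $J$'s give distinct ground states.

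\emph{Step 2: the map $\Psi \mapsto c$ (hence $J \mapsto c$) is injective, via the variational principle.} Suppose $J^{(1)}_{i,j} \neq J^{(2)}_{i,j}$ but $c^{(1)}_{i,j} = c^{(2)}_{i,j}$ for all $i<j$. By Step 1 and property 1 of the Lemma (uniqueness of the ground state), $\ket{\Psi^{(1)}} \neq \ket{\Psi^{(2)}}$ (not merely up to phase), so the variational principle holds as a \emph{strict} inequality:
\begin{equation}
E^{(1)} = \langle\Psi^{(1)}|\ham^{(1)}|\Psi^{(1)}\rangle < \langle\Psi^{(2)}|\ham^{(1)}|\Psi^{(2)}\rangle .
\end{equation}
Now $\ham^{(1)} = \ham^{(2)} + \sum_{i<j}(J^{(1)}_{i,j} - J^{(2)}_{i,j})\,\hat{\sigma}^z_i\hat{\sigma}^z_j$, and $\langle\Psi^{(2)}|\hat{\sigma}^z_i\hat{\sigma}^z_j|\Psi^{(2)}\rangle = (4/\hbar^2)\,c^{(2)}_{i,j}$. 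So the right-hand side equals $E^{(2)} + (4/\hbar^2)\sum_{i<j}(J^{(1)}_{i,j}-J^{(2)}_{i,j})\,c^{(2)}_{i,j}$, giving
\begin{equation}\label{eq:ineq1}
E^{(1)} < E^{(2)} + \frac{4}{\hbar^2}\sum_{i<j}(J^{(1)}_{i,j}-J^{(2)}_{i,j})\,c^{(2)}_{i,j} .
\end{equation}
By symmetry (swapping the roles of $1$ and $2$, using $c^{(1)}_{i,j}=c^{(2)}_{i,j}$),
\begin{equation}\label{eq:ineq2}
E^{(2)} < E^{(1)} + \frac{4}{\hbar^2}\sum_{i<j}(J^{(2)}_{i,j}-J^{(1)}_{i,j})\,c^{(1)}_{i,j}
= E^{(1)} - \frac{4}{\hbar^2}\sum_{i<j}(J^{(1)}_{i,j}-J^{(2)}_{i,j})\,c^{(2)}_{i,j} .
\end{equation}
Adding \eqref{eq:ineq1} and \eqref{eq:ineq2} yields $E^{(1)} + E^{(2)} < E^{(1)} + E^{(2)}$, a contradiction. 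Therefore $c^{(1)}_{i,j} \neq c^{(2)}_{i,j}$, so $J \mapsto c$ is injective; combined with $J$-representability (which supplies surjectivity onto the set of admissible $c$'s by definition), this is the claimed bijection $J_{i,j} \Leftrightarrow c_{i,j}$.

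\emph{Main obstacle.} The routine parts are the two \emph{reductio} arguments of Steps 1 and 2, which are formally identical to the Hohenberg--Kohn proof. The real work sits in \emph{justifying that $\langle\sigma|\Psi\rangle\neq 0$ for all $\ket{\sigma}$ and that the ground state is unique} — i.e., the content of Lemma~\ref{lemma:NonZeroOmega}, which is exactly where the hypothesis $\Omega \neq 0$ enters (for $\Omega = 0$ the argument collapses, as the ferromagnetic example in the text shows). That lemma is proved by a Perron--Frobenius argument on $e^{-\beta\ham}$ after a $\prod_j(i\hat{\sigma}_j^y)$ gauge transformation to make $\Omega < 0$; the delicate point there is showing the off-diagonal entries $\langle\sigma'|e^{-\beta\ham}|\sigma\rangle$ are strictly positive, which follows from the stochastic-series-expansion representation \eqref{eq:SSE} since every spin configuration is connected to every other by at most $N$ single-site $\hat{\sigma}^x$ flips. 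A secondary subtlety worth flagging explicitly is that the strict variational inequality requires $\ket{\Psi^{(1)}}$ and $\ket{\Psi^{(2)}}$ to be genuinely different \emph{rays}, not equal up to a global phase; this is guaranteed precisely because Step 1 rules out $\ket{\Psi^{(1)}} \propto \ket{\Psi^{(2)}}$ together with the non-degeneracy from property 1.
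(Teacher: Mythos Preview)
Your proposal is correct and follows essentially the same two-step Hohenberg--Kohn strategy as the paper: first show distinct $J$'s cannot share a ground state (using the lemma that all $\langle\sigma|\Psi\rangle\neq 0$), then use the strict variational inequality to derive the contradiction $E_1+E_2<E_1+E_2$. The only cosmetic difference is in Step~1: the paper extracts $E^{(1)}=E^{(2)}$ and each $J^{(1)}_{i,j}=J^{(2)}_{i,j}$ by explicit partial sums over configurations, whereas you invoke the linear independence of $\{1\}\cup\{\sigma_i\sigma_j\}_{i<j}$ on $\{\pm1\}^N$ directly---both arguments are equivalent and equally valid.
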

\begin{proof}
The direction $J_{i,j}\implies c_{i,j}$ is obvious. We now prove that $c_{i,j}\implies J_{i,j}$ by {\it reductio ad absurdum}, following the same logic for the classic HK theorem~\cite{HohenbergKohn}. By the aforementioned lemma, the ground state of $\ham$ is nondegenerate.

For fixed values of $\Omega \ (\neq 0)$ and $\delta$, we therefore start by assuming that one can construct two ground-state wave functions, $\Psi_1$ and $\Psi_2$, with identical $c_{i,j}$'s but different $J_{i,j}$'s, \textit{i.e.} $J_{i,j}^{(1)}$ and $J_{i,j}^{(2)}$. Therefore, the Hamiltonians $\ham_1$ and $\ham_2$, whose ground states are $\Psi_1$ and $\Psi_2$, with ground-state energies $E_1$ and $E_2$, 
are identical besides the $J_{i,j}$ term:
\begin{gather}
    \ham_{1} = \Omega \sum_i \hat{\sigma}_i^x + \delta \sum_i \hat{\sigma}_i^z + \sum_{i<j} J_{i,j}^{(1)} \hat{\sigma}_i^z \hat{\sigma}_j^z,\\
    \ham_{2} = \Omega \sum_i \hat{\sigma}_i^x + \delta \sum_i \hat{\sigma}_i^z + \sum_{i<j} J_{i,j}^{(2)} \hat{\sigma}_i^z \hat{\sigma}_j^z.
\end{gather}
We first show that $\Psi_1\neq \Psi_2$. Assuming that instead $\Psi_1=\Psi_2\equiv\Psi$, one 
then has: $(\ham_1-\ham_2)|\Psi\rangle = (E_1-E_2)|\Psi\rangle$. Expanding $|\Psi\rangle$ on a basis of 
eigenstates $|\sigma\rangle \equiv |\sigma_1,\cdots,\sigma_N\rangle$ of the $\hat{\mathbf{S}}^z_i$'s, one then obtains:
\begin{equation}
    \sum_{i<j} (J_{i,j}^{(1)}-J_{i,j}^{(2)}) \sigma_i \sigma_j  = 
    (E_1-E_2) 
\label{eq:SamePsiContradiction}    
\end{equation}
which must be satisfied for {\it all} configurations $\sigma$ for which $\langle\sigma|\Psi\rangle\neq 0$. But $\langle\sigma|\Psi\rangle\neq 0$ is true for all possible $\sigma$ by the previous lemma. So Eq.~\eqref{eq:SamePsiContradiction} is satisfied for all $2^N$ configurations of $\sigma$. If we sum this equation over all possible configurations, we get $0=2^N(E_1-E_2)$, as $\sum_\sigma\sigma_i\sigma_j=0$ for any $i\neq j$. Thus, $E_1=E_2$.

Now we have the equation
\begin{equation}
    \sum_{i<j} (J_{i,j}^{(1)}-J_{i,j}^{(2)}) \sigma_i \sigma_j  = 0.
\label{eq:SamePsiContradictionCotd} 
\end{equation}
To arrive at a contradiction, we will show that $J_{i,j}^{(1)}=J_{i,j}^{(2)}$ for every $i,j$ pair with $i<j$. After picking one such specific ${i,j}$ pair, choose all configurations where $\sigma_i=1, \sigma_j=1$. Call this set $s_{ij}$.
When $i'<j'$, $\sum_{\sigma\in s_{ij}}\sigma_{i'}\sigma_{j'}=2^{N-2}\delta_{ii'}\delta_{jj'}$, $\delta_{ij}$ being the Kroenecker delta. Summing Eq.~\eqref{eq:SamePsiContradictionCotd} over $\sigma\in s_{ij}$, we get $2^{N-2}(J_{i,j}^{(1)}-J_{i,j}^{(2)})=0$. This is in contradiction to the couplings being distinct.
Hence, $\Psi_1\neq \Psi_2$. 

Having established that, we can now use the variational principle in the form of a 
{\it strict} inequality for both Hamiltonians:
\begin{gather}
    E_1 = \expval{\ham_1}{\Psi_1} < \expval{\ham_1}{\Psi_2}, \label{eq:var_principle_1}\\
    E_2 = \expval{\ham_2}{\Psi_2} < \expval{\ham_2}{\Psi_1} \label{eq:var_principle_2}.
\end{gather}
Let us expand the right side of the inequalities:
\begin{align*}
    \expval{\ham_1}{\Psi_2} &= \expval{(\ham_1 - \ham_2)}{\Psi_2} + \expval{\ham_2}{\Psi_2}\\ 
    &= \expval{\sum_{i<j} (J_{i,j}^{(1)} - J_{i,j}^{(2)}) \hat{\sigma}_i^z \hat{\sigma}_j^z}{\Psi_2} + E_2\\
    &= \sum_{i<j} (J_{i,j}^{(1)} - J_{i,j}^{(2)}) c_{i,j} + E_2,\\
    \expval{\ham_2}{\Psi_1} &= \expval{\sum_{i<j} (J_{i,j}^{(2)} - J_{i,j}^{(1)}) \hat{\sigma}_i^z \hat{\sigma}_j^z}{\Psi_1} + E_1\\
    &= - \sum_{i<j} (J_{i,j}^{(1)} - J_{i,j}^{(2)}) c_{i,j} + E_1.
\end{align*}
Now let us add Eqs.~\eqref{eq:var_principle_1} and~\eqref{eq:var_principle_2}:
%
\begin{equation}
E_1 + E_2 < E_1 + E_2 + \sum_{i<j} (J_{i,j}^{(1)} - J_{i,j}^{(2)}) c_{i,j} - \sum_{i<j} (J_{i,j}^{(1)} - J_{i,j}^{(2)}) c_{i,j}
\end{equation}
and hence: 
\begin{equation}
    E_1+E_2 < E_1+E_2
\end{equation}
is a clear contradiction. 

We thus conclude by \textit{reductio ad absurdum} that one cannot have two ground-state wave functions with identical $c_{i,j} = \langle \hat{\mathbf{S}}_i^z \hat{\mathbf{S}}_j^z\rangle$ and different $J_{i,j}$, and hence that  $\langle \hat{\mathbf{S}}_i^z \hat{\mathbf{S}}_j^z\rangle \implies J_{i,j}$.

\end{proof}

\section{Number of physical system sizes in training datasets}
\label{app:amount_training_datasets_training_datasets}

In this section of the Appendix, the effect of the number of physical system sizes included in the training datasets is studied. Here, we utilize exact training datasets.

In Fig.~\ref{fig:metrics_training_sizes}, for the set of sizes $\{4\times 4,5\times 5\}$ and $\{4\times 4,5\times 5,6\times 6\}$, we plot the coefficient of determination $R^2$~\eqref{eq:equation_coefficient_determination} in the top panel, the MAE~\eqref{eq:mean_absolute_error} in the middle panel, and the median of the MAE (MEDAE) in the bottom panel. The case $\#3$ is utilized. Without much surprise, the GNN performs much better at extrapolating relative NN distances $\Delta R_{\langle i,j\rangle}$ for clusters whose sizes lie outside the training set because of system size effects. It may, however, come as a surprise that the smaller system sizes, \textit{i.e.} $4\times 4$ and $5\times 5$, perform worse than that $6\times 6$ within the training size domain, but it seems that this effect mostly comes from the inclusion of the $\chi^{\text{NNN}}$, as seen in Fig.~\ref{fig:metrics_training_GNN}. Since smaller clusters are subparts of larger clusters, the information of the smaller clusters aggregates and makes it possible to predict with high precision the $\Delta R$'s of clusters with sizes beyond the training set.

\begin{figure}[h!]
  \centering
    \includegraphics[width=0.95\columnwidth]{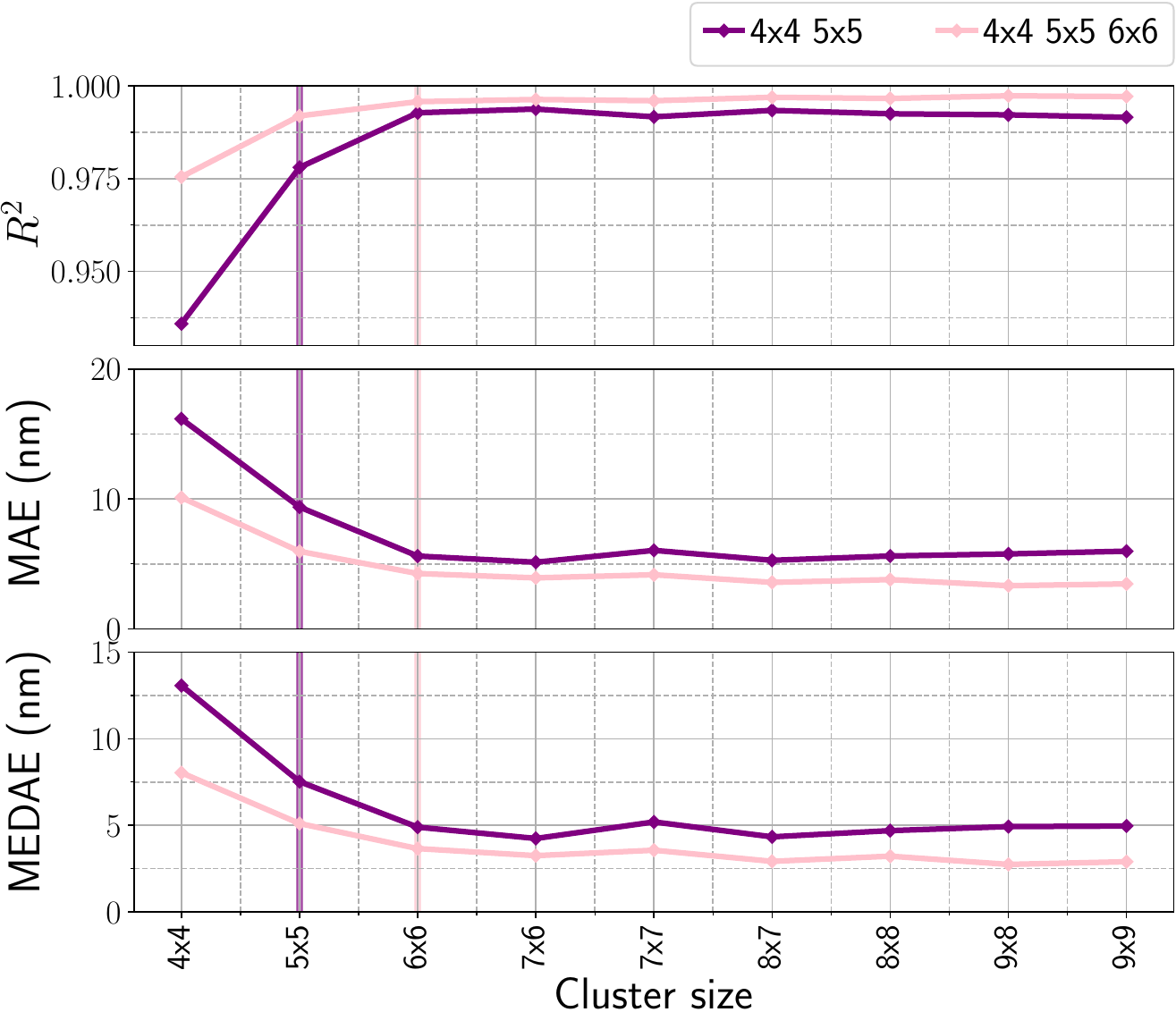}
      \caption{\textbf{Training dataset cluster size effect.} Incidence of the cluster sizes included in the exact training dataset on various metrics. In violet, clusters of sizes $4\times 4$ and $5\times 5$ were considered, whereas in light pink, one extra cluster size $6\times 6$ was considered. The vertical lines refer to the point beyond which sizes lie outside of the training range, \textit{i.e.} these sizes did not take part of the training datasets. The colors of the vertical lines match with that of the curves.}
  \label{fig:metrics_training_sizes}
\end{figure}

\section{Predicting NN and NNN distances}
\label{app:target_optimization}

In this section of the Appendix, we check the performance of the GNN in predicting the NN and NNN relative distances simultaneously. Again, we utilize exact training datasets.

Regarding the cost function in Eq.~\eqref{eq:L2_minimization}, one could ask whether one should use both the relative NN and NNN distances whenever both $\chi^{\text{NN}}$ and $\chi^{\text{NNN}}$ are provided, or if one should use an overcomplete set of physical short-range correlators to predict only the relative NN distances, \textit{i.e.} $\chi^{\text{NN}}$ and $\chi^{\text{NNN}}$. As proven in Appendix~\ref{app:bijection_proof}, with arbitrary precision on the correlation functions, one would theoretically only need $\chi^{\text{NN}}$ to predict relative NN distances $\Delta R_{\langle i,j\rangle}$ between Rydberg atoms. However, we may think that giving some extra information to the GNN could be beneficial to overcome numerical or experimental imprecision. To that effect, in Fig.~\ref{fig:metrics_training_targets}, we compare two GNN training cases $\#3$ where, on one hand, only relative NN distances are used as targets, and on the other hand, both NN and NNN relative distances are used as targets, for training sizes $\{4\times 4,5\times 5,6\times 6\}$ and $\{4\times 4,5\times 5,6\times 6,7\times 7\}$. It is very clear that the overcompletion of information helps the GNN palliate the numerical/experimental inaccuracies in determining the relative distances, meaning that predicting solely NN relative distances with $\chi^{\text{NN}}$ and $\chi^{\text{NNN}}$ at hand is better than predicting both NN and NNN relative distances with the same information throughput. It also seems that predictions of larger relative distances (NNN and above) require larger cluster sizes in the training set because edge effects prevail longer. Therefore, if one seeks to also predict NNN relative distances between atoms, one should consider training the GNN with larger cluster sizes to improve substantially the predictive power of the model (compare the navy blue curve with the pink one in Fig.~\ref{fig:metrics_training_targets}). Interestingly, adding another task network specializing in predicting NNN distances worsened the GNN performance compared to using the same task network for predicting NN and NNN distances.

\begin{figure}[h!]
  \centering
    \includegraphics[width=0.95\columnwidth]{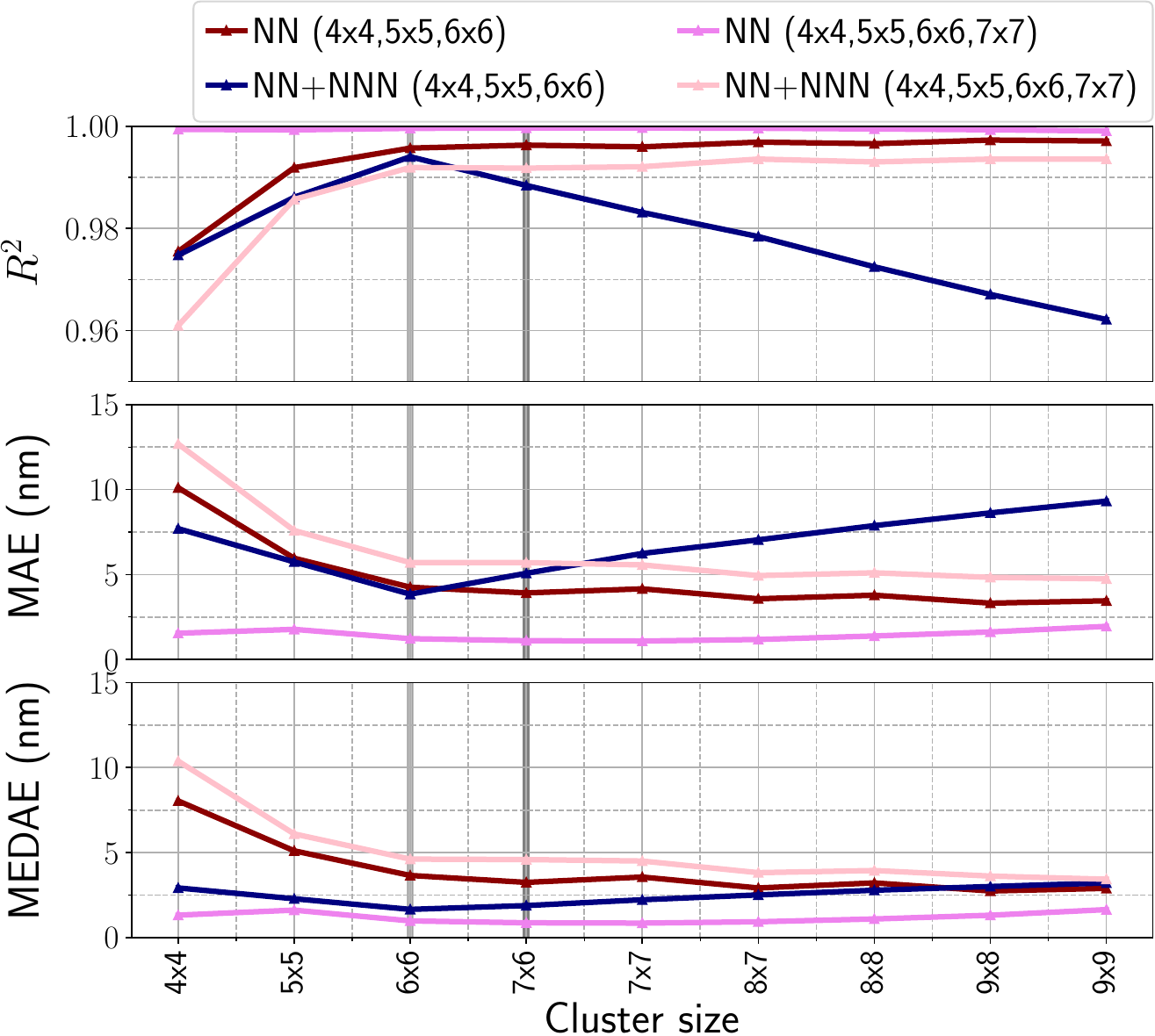}
      \caption{\textbf{Target optimization schemes.} Incidence on various metrics of including the NNN relative distance as a target in the cost function minimization. In blue and light pink, both the NN and NNN relative distances are predicted, while in red and pink, only the NN ones are predicted. Training sizes used are indicated in parentheses. The vertical grey lines refer to the point beyond which sizes lie outside of the training range.}
  \label{fig:metrics_training_targets}
\end{figure}

\section{Details on the numerical implementation and hyperparameters used}
\label{app:technical_details_training_testing}

Some technical details are revealed in this section of the appendix. The GNN was implemented using PyTorch~\cite{paszke_pytorch_2019}, and the data making up the mutually exclusive training and testing datasets were churned out using the DMRG library ITensors.jl~\cite{10.21468/SciPostPhysCodeb.4}.

For all cases laid out in Table~\ref{table:framed}, the same GNN hyperparameters were used. The number of layers defining the PNA was 4, and the number of input features that both nodes and edges corresponded to the length of the $\Omega$-history. The node embedding dimension used was 32. The edge feature dimension corresponds to the combined number of NN edges and, if applicable, NNN edges embedded into a higher dimensional space through a three-layer perceptron. 550 epochs were sufficient to train the GNN and obtain a validation loss comparable to the training one. A histogram of the degree of connectedness of the training dataset graphs is provided to the GNN during the training phase. 
The number of graphs in the training dataset is $2000$ per physical system size, times the $\Omega$-history, making it $20,000$ samples per physical system size. The total number of samples is the same whether the snapshots were used to calculate the correlation function or direct evaluation was employed. The testing dataset comprises $200$ graphs per physical system size times the $\Omega$-history.

Regarding the learning rate profile, the same linearly decreasing learning rate was employed to optimize all the GNNs trained. The starting learning rate value is $5\times 10^{-3}$ while the ending value is $2.5\times 10^{-4}$, for a total of 550 epochs. The learning rate profile plays a big role in the training process: its ramp and extrema should be carefully tuned. The linearly decreasing ramp just mentioned was a good ramp profile.

\section{Technical details of the DMRG calculations}
\label{app:technical_details_DMRG}

To ensure that the Matrix Product States are converged after calculation, the maximum truncation error in the singular value decomposition is capped to single-precision floating-point accuracy ($1\times 10^{-7}$), and we make sure that the energy estimations have converged to seven decimal places with increasing bond dimension through the sweeps. The energy converges within each sweep, and the bond dimension increases from one sweep to another. We monitored how the maximum truncation error and energy estimation changed through 20 DMRG sweeps.  To achieve that, we used a maximal bond dimension of $\chi_{\text{DMRG}}=80$ for a cluster size $4\times 4$, and then increase $\chi_{\text{DMRG}}$ by $10$ for larger cluster sizes, up to $9\times 9$ to get $\chi_{\text{DMRG}}=160$.

\end{document}